\titleformat*{\section}{\Large\scshape}
\titleformat*{\subsection}{\large\scshape}
\titleformat*{\subsubsection}{\normalsize\itshape}
\theoremstyle{plain} \numberwithin{equation}{section}
\newtheorem{theorem}{Theorem}[section]
\definecolor{linkblue}{HTML}{1EB5EB}
\definecolor{linkgreen}{HTML}{06BA63}
\definecolor{linkred}{HTML}{E3170A}
\definecolor{codegreen}{HTML}{06BA63}
\definecolor{codegray}{rgb}{0.5,0.5,0.5}
\definecolor{codepurple}{rgb}{0.58,0,0.82}
\definecolor{codeblue}{HTML}{43C1EF}
\definecolor{backcolour}{HTML}{F9F6F0}
\lstdefinestyle{mystyle}{
    backgroundcolor=\color{backcolour},   
    commentstyle=\color{codegreen},
    keywordstyle=\color{codeblue},
    numberstyle=\tiny\color{codegray},
    stringstyle=\color{codepurple},
    basicstyle=\ttfamily\footnotesize,
    breakatwhitespace=false,         
    breaklines=true,                 
    captionpos=b,                    
    keepspaces=true,                 
    numbers=left,                    
    numbersep=5pt,                  
    showspaces=false,                
    showstringspaces=false,
    showtabs=false,                  
    tabsize=2
}
\begin{document} 
\allowdisplaybreaks
\thispagestyle{empty}
\setstretch{1.15}

\begin{center}
    \textsc{\Large{Ion-Acoustic Wave Dynamics in a Two-Fluid Plasma}}
\\[20pt]
    {\large{Emily K. Kelting\textsuperscript{\small{\twemoji{cloud with lightning}}} and J. Douglas Wright\textsuperscript{\footnotesize{\twemoji{dragon}}}}}
    \\[10pt]
\end{center}
\begin{minipage}[c]{0.45\textwidth}
\centering\textsuperscript{\small{\twemoji{cloud with lightning}}}\textit{Department of Mathematics \\ University of New England \\ Biddeford, ME 04005}
\end{minipage}
\hfill
\begin{minipage}[c]{0.45\textwidth}
\centering\textsuperscript{\footnotesize{\twemoji{dragon}}}\textit{Department of Mathematics \\ Drexel University \\ Philadelphia, PA 19104}
\end{minipage}
\\[10pt]

\hrule
\begin{abstract}
    Plasma is a medium containing free electrons and cations, where each particle group behaves as a conducting fluid with a single velocity and temperature in the presence of electromagnetic fields. The difference in roles electrons and ions play define the two-fluid description of plasma. This paper examines ion-acoustic waves generated by the particles in both hot and cold plasma using a collisionless  ``Euler-Poisson'' (EP) system. Employing phase-space asymptotic analysis, we establish that for specific wave speeds, EP acquires homoclinic orbits at the steady-state equilibrium and consequently, traveling waves. Combining \texttt{python} and \texttt{Wolfram Mathematica}, we captured visualizations of such behavior in one spatial dimension.
\end{abstract}

{\small{\textbf{Keywords:} Plasma, Euler-Poisson, ion-acoustic waves, solitons, traveling waves, numerical analysis}}
\\[6pt]
\hrule

\pagestyle{fancy}
\section{Introduction}

Plasma, a state of matter consisting of free electrons and cations, exhibits complex behaviors under the influence of electromagnetic fields. The two-fluid Euler-Poisson (EP) system is a powerful framework for studying these behaviors by treating the particle groups as separate conducting fluids with distinct velocities and temperatures. Applying this system to hot and cold plasma scenarios, we provide a detailed analysis of the interactions and dynamics within the medium -- specifically the propagation of ion-acoustic waves. 

The collisionless Euler-Poisson (EP) system for a two-fluid plasma is 
\begin{equation}
    \label{eq::5hotrescale}
    \begin{aligned}
        \partial_t n_+ + \partial_x(n_+ v_+) = 0, 
         \\[4pt]
        \partial_t n_- + \partial_x(n_- v_-) = 0,  \\[4pt]
        \partial_t v_+ + \tfrac{1}{2}\partial_x v_+^2 + \tau_i \partial_x \ln(n_+) + \partial_x\phi = 0,
        \\[4pt]
        m_e \left(\partial_t v_- + \tfrac{1}{2}\partial_x v_-^2\right) + \partial_x \ln(n_-) - \partial_x\phi = 0,
         \\[4pt]
        \partial_{xx}\phi - n_- + n_+ = 0.
    \end{aligned}
\end{equation}
Here, \(n_{\pm}(x,t)\) denotes the particle number density at spatial point \(x\in\mathbb{R}\) and time \(t\geq 0\), \(v_{\pm}(x,t)\) represents the velocity, and \(\phi(x,t)\) is the electric potential. The subscripts represent the particle groups by their charge with ``+'' being the cations and ``--'' the electrons. The parameter \(m_e\) is the mass of the electron, and \(\tau_i\) is the temperature of the ion. The model is scaled so the ion mass is at unity, making the corresponding electron mass \(m_e \approx 0.00054551\). Additionally, by setting the electron temperature to unity, the plasma can be identified as ``hot'' when \(\tau_i=1\)  (thermodynamic equilibrium) and ``cold'' when \(\tau_i=0\) \cite{krishan}.

\subsection{Motivation}

Prior studies on plasma show the cold single-fluid description possesses ion-acoustic solitary wave solutions \cite{guo,haragus}. Haragus and Scheel established such waves are spectrally stable at low amplitudes in one spatial dimension \cite{haragus}. However, not much is known about their existence in the two-fluid system. So far, Kelting and Wright have established that Korteweg–de Vries (KdV) equations govern the dynamics of the two-fluid EP system in the long-wavelength limit for both hot and cold plasmas. An incredibly intriguing aspect of the KdV equation is its ability to support solitons. Observing these facts, it is important to check if \eqref{eq::5hotrescale} has traveling waves as well.
\section{Have wave -- Will travel}\label{trav_wave}

\subsection{An ansatz}

We begin by making a traveling wave ansatz for waves of speed \(\mu\),
\begin{align*}
    n_{\pm}(x,t) & = 1 + \tilde{n}_{\pm}(x-\mu t) \coloneqq 1+ \tilde{n}_{\pm}(\xi), \\
    v_{\pm}(x,t) & = \tilde{v}_{\pm}(x-\mu t) \coloneqq \tilde{v}_{\pm}(\xi), \\
    \phi(x,t) & = \tilde{\phi}(x-\mu t) \coloneqq \tilde{\phi}(\xi).
\end{align*}
Substituting into \eqref{eq::5hotrescale}, we have a system depending on only one spatial variable's derivative,
\vspace{-12pt}
\begin{subequations}
    \label{eq:5hot_TravWave}
    \begin{align}
        \label{eq:5travWave_dens}
        -\mu \tilde{n}_\pm' + \tilde{v}_\pm' + (\tilde{n}_\pm \tilde{v}_\pm)' & = 0, \\
        \label{eq:5travWave_vel+}
        -\mu \tilde{v}_+' + \tfrac{1}{2}(\tilde{v}_+^2)' + \tau_i (\ln(1+\tilde{n}_+))' + \tilde{\phi}' & = 0, \\
        \label{eq:5travWave_vel-}
        -m_e\mu \tilde{v}_-' + \tfrac{m_e}{2}(\tilde{v}_-^2)' + (\ln(1+\tilde{n}_-))' - \tilde{\phi}' & = 0, \\
        \label{eq:5travWave_phi}
        \tilde{\phi}'' - \tilde{n}_- + \tilde{n}_+ & = 0.
    \end{align}
\end{subequations}

The first four equations can be integrated with respect to \(\xi\). Because traveling waves are asymptotically null, i.e. \(\lim_{\xi\to\pm\infty} \big( \tilde{n}_{\pm}(\xi), \tilde{v}_{\pm}(\xi), \tilde{\phi}(\xi) \big) = 0\), the constants of integration are zero. From \eqref{eq:5travWave_dens}, we determine a relationship between the particle density and velocity waves, 
\begin{equation}\label{eq:travwave_dens_vel}
\tilde{n}_\pm = \frac{\tilde{v}_\pm}{\mu-\tilde{v}_\pm}.
\end{equation}
Next, we integrate \eqref{eq:5travWave_vel+} to get
\begin{equation*}
    \tilde{\phi} = \mu \tilde{v}_+ - \tfrac{1}{2}(\tilde{v}_+^2) - \tau_i \ln(1+\tilde{n}_+).
\end{equation*}
Replacing \(\tilde{n}_+\) with \eqref{eq:travwave_dens_vel}, the relationship is then only between the ion velocity wave and the electric potential wave. Hence,
\begin{equation}\label{eq:travwave_phi_vel+}
    \tilde{\phi} = \mu \tilde{v}_+ - \tfrac{1}{2}(\tilde{v}_+^2) + \tau_i \ln\left(\mu-\tilde{v}_+ \right)  - \tau_i \ln\mu.
\end{equation}
Similarly, \eqref{eq:5travWave_vel-} yields
\begin{equation}\label{eq:travwave_phi_vel-}
    \tilde{\phi} = -m_e\mu \tilde{v}_- + \tfrac{m_e}{2}(\tilde{v}_-^2) - \ln\left(\mu - \tilde{v}_- \right) + \ln\mu.
\end{equation}

Right now, these calculations have established \eqref{eq:5travWave_phi} can be rewritten as 
\begin{equation*}
    \tilde{\phi}'' = \frac{\tilde{v}_-}{\mu-\tilde{v}_-} - \frac{\tilde{v}_+}{\mu-\tilde{v}_+}.
\end{equation*}
However, if we define new functions \(k_\pm(\tilde{\phi}) \coloneqq \tilde{v}_\pm\) such that their inverses satisfy \eqref{eq:travwave_phi_vel+} and \eqref{eq:travwave_phi_vel-}, we can rework it as
\begin{equation}\label{eq:phiODE}
    \tilde{\phi}'' = \frac{k_-(\tilde{\phi})}{\mu-k_-(\tilde{\phi})} - \frac{k_+(\tilde{\phi})}{\mu-k_+(\tilde{\phi})}.
\end{equation}
This is a single differential equation of one variable. Investigation into the behavior of \eqref{eq:phiODE} in its phase space will provide information on the existence of solitary wave solutions to five-equation partial differential system \eqref{eq::5hotrescale}. 

To begin our analysis, a relationship between \(\tilde{\phi}'(\xi)\) and \(\tilde{\phi}(\xi)\) is necessary. Multiplying both sides of \eqref{eq:phiODE} by \(\tilde{\phi}'(\xi)\) and integrating provides a useful left-hand-side of the equation, but not so much on the right: 
\begin{equation*}
    \tfrac{1}{2}(\tilde{\phi}')^2 = \int \frac{k_-(\tilde{\phi})\tilde{\phi}'}{\mu-k_-(\tilde{\phi})} d\xi - \int \frac{k_+(\tilde{\phi})\tilde{\phi}'}{\mu-k_+(\tilde{\phi})} d\xi
\end{equation*}
Though there are no explicit formulas for \(k_{\pm}(\tilde{\phi}(\xi))\), we can utilize the chain rule and substitution to find a nearly explicit conserved quantity. Solving for right-hand-side quantities,
\begin{align*}
    \int \frac{k_-(\tilde{\phi})\tilde{\phi}'}{\mu-k_-(\tilde{\phi})}d\xi & = \int \frac{\tilde{v}_-\left(-m_e\mu + m_e\tilde{v}_- + \frac{1}{\mu - \tilde{v}_-}\right)\tilde{v}_-'}{\mu-\tilde{v}_-}d\xi \\
    \int \frac{k_-(\tilde{\phi})\tilde{\phi}'}{\mu-k_-(\tilde{\phi})}d\xi & = \ln\lvert \mu - \tilde{v}_-\rvert - \tfrac{m_e}{2}(\tilde{v}_-)^2 + \frac{\mu}{\mu - \tilde{v}_-} + const \\
    & = \ln\lvert \mu - k_-(\tilde{\phi})\rvert - \tfrac{m_e}{2}(k_-(\tilde{\phi}))^2 + \frac{\mu}{ \mu - k_-(\tilde{\phi})} + const,
\end{align*}
and
\begin{align*}
    \int \frac{k_+(\tilde{\phi})\tilde{\phi}'}{\mu-k_+(\tilde{\phi})}d\xi & = \int \frac{\tilde{v}_+\bigg(\mu - \tilde{v}_+ - \frac{\tau_i}{\mu - \tilde{v}_+}\bigg)\tilde{v}_+'}{\mu-\tilde{v}_+}d\xi \\
    & = -\tau_i \ln\lvert \mu - \tilde{v}_+\rvert + \tfrac{1}{2}(\tilde{v}_+)^2 - \frac{\mu \tau_i}{\mu - \tilde{v}_+} + const \\
    & = -\tau_i \ln\lvert \mu - k_+(\tilde{\phi})\rvert + \tfrac{1}{2}(k_+(\tilde{\phi}))^2 - \frac{\mu \tau_i}{\mu - k_+(\tilde{\phi})} + const. 
\end{align*}
To find the constants of integration, recall traveling waves are asymptotically null. That is, each integral needs to approach zero as \(\xi\to 0\). Therefore,
\begin{align*}
    \int \frac{k_-(\tilde{\phi})\tilde{\phi}'}{\mu-k_-(\tilde{\phi})}d\xi & = \ln\lvert \mu - k_-(\tilde{\phi})\rvert - \tfrac{m_e}{2}(k_-(\tilde{\phi}))^2 + \frac{\mu}{ \mu - k_-(\tilde{\phi})} - ( \ln\mu + 1), \\
    \int \frac{k_+(\tilde{\phi})\tilde{\phi}'}{\mu-k_+(\tilde{\phi})}d\xi & = -\tau_i \ln\lvert \mu - k_+(\tilde{\phi})\rvert + \tfrac{1}{2}(k_+(\tilde{\phi}))^2 - \frac{\mu \tau_i}{\mu - k_+(\tilde{\phi})} + \tau_i (\ln\mu + 1).
\end{align*}
Next, we note \(\mu > k_\pm(\tilde{\phi})\) because of the domains for \eqref{eq:travwave_phi_vel+} and \eqref{eq:travwave_phi_vel-}. Hence, 
\begin{equation*}
    \ln\lvert \mu - k_\pm(\tilde{\phi}) \rvert = \ln(\mu - k_\pm(\tilde{\phi})).
\end{equation*}

\subsection{Unleashing the Potential Energy}

We will use a potential energy argument to study the behavior of \(\tilde{\phi}\) in phase space. Let \(\mathcal{V}(\tilde{\phi})\) represent the ``potential energy'' curve which shows how energy is distributed among the electrons and cations as the wave propagates through plasma. Here,
\begin{equation*}
    \mathcal{V}(\tilde{\phi}) = - \int \frac{k_-(\tilde{\phi})\tilde{\phi}'}{\mu-k_-(\tilde{\phi})}d\xi + \int \frac{k_+(\tilde{\phi})\tilde{\phi}'}{\mu-k_+(\tilde{\phi})} d\xi
\end{equation*}
so that \(\tfrac{1}{2}(\tilde{\phi}')^2 = -\mathcal{V}(\tilde{\phi}).\) Utilizing our previous calculations, we can rewrite the potential energy as
\begin{equation}\label{eq:PotEnergy}
    \begin{aligned}
     \mathcal{V}(\tilde{\phi}) = & -\ln( \mu - k_-(\tilde{\phi})) + \tfrac{m_e}{2}(k_-(\tilde{\phi}))^2 - \frac{\mu}{\mu - k_-(\tilde{\phi})} + ( \ln\mu + 1) \\
     & - \tau_i \ln (\mu - k_+(\tilde{\phi}) ) + \tfrac{1}{2}(k_+(\tilde{\phi}))^2 - \frac{\mu \tau_i}{\mu - k_+(\tilde{\phi})} + \tau_i (\ln\mu + 1)
     \end{aligned}
\end{equation}

The domain of such a function is 
\begin{equation*}
    \left[k_-^{-1}(\mu - \tfrac{1}{\sqrt{m_e}}), \, k_+^{-1}(\mu - \sqrt{\tau_i})\right].
\end{equation*}
To obtain this, we look at the ranges of \(k_\pm^{-1}(\tilde{v}_\pm)\) defined by \eqref{eq:travwave_phi_vel+} and \eqref{eq:travwave_phi_vel-}. For instance, consider \(k_+^{-1}(\tilde{v}_+)\). We know for all \(\tilde{v}_+\),
\begin{equation*}
    k_+^{-1}(\tilde{v}_+)'' = -1 - \frac{\tau_i}{(\mu - \tilde{v}_+)^2} < 0,
\end{equation*}
so if \(k_+^{-1}(\tilde{v}_+)\) has a critical point, it must be a maximum. Following this path, \(k_+^{-1}(\tilde{v}_+)' = 0\) when \((\mu - \tilde{v}_+)^2 - \tau_i = 0\). The only solution in the domain is~\(\tilde{v}_+~=~\mu~-~\sqrt{\tau_i}\). Thus, the range of \(k_+^{-1}(\tilde{v}_+)\), and moreover domain of \(k_+(\tilde{\phi})\), is \(\left(-\infty,k_+^{-1}(\mu - \sqrt{\tau_i})\right]\). Through a similar process, we can find the range of \(k_-^{-1}(\tilde{v}_-)\), or the domain of \(k_-(\tilde{\phi})\), to be \(\left[k_-^{-1}(\mu - \tfrac{1}{\sqrt{m_e}}),\infty\right)\).


\subsection{Looking for Homoclinic Orbits}

We want to turn our traveling wave ansatz into a certainty. Therefore, we need to look for homoclinic orbits -- a path in the phase space that starts and ends at the same equilibrium point -- because they translate to traveling waves in the single spatial dimension. The potential energy argument states these orbits occur when the following conditions are all satisfied:
\begin{equation*}
    \mathcal{V}(0) = 0, \quad \mathcal{V}'(0)=0, \quad \mathcal{V}''(0)<0, \quad \exists \tilde{\phi}^* > 0 \;:\; \mathcal{V}(\tilde{\phi}^*) > 0. 
\end{equation*}
We found that \eqref{eq:PotEnergy} meets these requirements at specific wave speeds with \(\tilde{\phi}^* = k_+^{-1}(\mu - \sqrt{\tau_i})\), the right-most domain point. The results are summarized in Theorem~\ref{thm:PE_homoclinic}.
\begin{center}{\rule{4cm}{0.4pt}}\end{center}

\begin{theorem}\label{thm:PE_homoclinic}
    The ``potential energy'' system for a two-fluid plasma,
    \begin{align*}
        \frac{(\tilde{\phi}')^2}{2} & = -\mathcal{V}(\tilde{\phi}), \\[5pt]
        \mathcal{V}(\tilde{\phi}) & = -\ln( \mu - k_-(\tilde{\phi})) + \tfrac{m_e}{2}(k_-(\tilde{\phi}))^2 - \frac{\mu}{\mu - k_-(\tilde{\phi})} + ( \ln\mu + 1) \\
        & \quad\; - \tau_i \ln (\mu - k_+(\tilde{\phi}) ) + \tfrac{1}{2}(k_+(\tilde{\phi}))^2 - \frac{\mu \tau_i}{\mu - k_+(\tilde{\phi})} + \tau_i (\ln\mu + 1), \\[5pt]
        \tilde{\phi} & \in \left[k_-^{-1}(\mu - \tfrac{1}{\sqrt{m_e}}), \, k_+^{-1}(\mu - \sqrt{\tau_i})\right] 
    \end{align*}
    acquires homoclinic orbits for wave speeds \(\mu\) such that
    \begin{align*}
         \frac{1}{\sqrt{m_e}} & > \mu > \sqrt{\tau_i}, \\
        \mu & > \sqrt{\frac{1+\tau_i}{1+m_e}}, \\
        \frac{m_e \mu^2}{2}\left(1 - e^{-\tfrac{1}{2} \left(\mu ^2 - \tau_i\right) - \tau_i \ln \left(\tfrac{\sqrt{\tau_i}}{\mu }\right)}\right)^2 & - e^{\tfrac{1}{2} \left(\mu ^2 - \tau_i\right) + \tau_i \ln \left(\frac{\sqrt{\tau_i}}{\mu }\right)} + (\mu - \sqrt{\tau_i})^2 + 1 > 0.
    \end{align*}
\end{theorem}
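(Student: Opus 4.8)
The plan is to verify, in order, the four potential-energy conditions $\mathcal{V}(0)=0$, $\mathcal{V}'(0)=0$, $\mathcal{V}''(0)<0$, and $\mathcal{V}(\tilde{\phi}^*)>0$, and to read off the three advertised inequalities as the arithmetic content of the last two. The starting observation is that both implicit relations \eqref{eq:travwave_phi_vel+} and \eqref{eq:travwave_phi_vel-} are solved by $\tilde{v}_\pm=0$ when $\tilde{\phi}=0$, so $k_\pm(0)=0$. Substituting this into the expression for $\mathcal{V}$ collapses the logarithms and rational terms against the additive constants $\ln\mu+1$, giving $\mathcal{V}(0)=0$ for free; this is precisely why those constants were fixed as they were. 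The first-derivative condition is equally cheap: since $\mathcal{V}$ is by construction the antiderivative for which $-\mathcal{V}'(\tilde{\phi})=\tilde{\phi}''=\frac{k_-(\tilde{\phi})}{\mu-k_-(\tilde{\phi})}-\frac{k_+(\tilde{\phi})}{\mu-k_+(\tilde{\phi})}$, evaluating at $\tilde{\phi}=0$ with $k_\pm(0)=0$ gives $\mathcal{V}'(0)=0$.

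For the concavity condition I would differentiate $\mathcal{V}'(\tilde{\phi})=\frac{k_+}{\mu-k_+}-\frac{k_-}{\mu-k_-}$ once more, using $\frac{d}{dk}\frac{k}{\mu-k}=\frac{\mu}{(\mu-k)^2}$ together with the inverse-function derivatives $k_\pm'(0)=1/g_\pm'(0)$, where $g_\pm$ denote the explicit right-hand sides of \eqref{eq:travwave_phi_vel+} and \eqref{eq:travwave_phi_vel-}. A short computation gives $g_+'(0)=(\mu^2-\tau_i)/\mu$ and $g_-'(0)=(1-m_e\mu^2)/\mu$, whence $\mathcal{V}''(0)=\frac{1}{\mu^2-\tau_i}-\frac{1}{1-m_e\mu^2}$. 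Positivity of $k_\pm'(0)$, that is, the requirement that the equilibrium sit on the increasing branch of each implicit curve interior to the stated domain, forces both denominators to be positive, which is exactly the first inequality $\sqrt{\tau_i}<\mu<1/\sqrt{m_e}$; under that constraint $\mathcal{V}''(0)<0$ is equivalent to $1-m_e\mu^2<\mu^2-\tau_i$, i.e. $\mu^2(1+m_e)>1+\tau_i$, which is the second inequality.

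The last condition is where the real work lies. I would take the right endpoint $\tilde{\phi}^*=k_+^{-1}(\mu-\sqrt{\tau_i})$, so that $k_+(\tilde{\phi}^*)=\mu-\sqrt{\tau_i}$ and $\mu-k_+(\tilde{\phi}^*)=\sqrt{\tau_i}$; feeding $\tilde{v}_+=\mu-\sqrt{\tau_i}$ through $g_+$ produces the closed form $\tilde{\phi}^*=\tfrac12(\mu^2-\tau_i)+\tau_i\ln(\sqrt{\tau_i}/\mu)$, and the cation half of $\mathcal{V}(\tilde{\phi}^*)$ is then completely explicit. The main obstacle is the electron half, because $k_-$ is defined only implicitly by the transcendental relation \eqref{eq:travwave_phi_vel-} and admits no elementary inverse. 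To get past this I would exploit the smallness of $m_e$ and invert \eqref{eq:travwave_phi_vel-} at leading order, dropping the $O(m_e)$ terms inside the defining relation to obtain $\tilde{\phi}=\ln\!\big(\mu/(\mu-\tilde{v}_-)\big)$ and hence $k_-(\tilde{\phi})=\mu(1-e^{-\tilde{\phi}})$, so that $\mu-k_-(\tilde{\phi}^*)=\mu e^{-\tilde{\phi}^*}$.

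Substituting $k_-(\tilde{\phi}^*)=\mu(1-e^{-\tilde{\phi}^*})$ into the electron terms, the logarithm becomes $-\ln(\mu e^{-\tilde{\phi}^*})+\ln\mu=\tilde{\phi}^*$, the rational term becomes $-e^{\tilde{\phi}^*}$, and the kinetic term becomes $\tfrac{m_e\mu^2}{2}(1-e^{-\tilde{\phi}^*})^2$. Adding the cation half and simplifying, I expect the $\tau_i\ln(\sqrt{\tau_i}/\mu)$ contributions to cancel between the two halves and the algebraic remainder to collapse to $(\mu-\sqrt{\tau_i})^2+1$, leaving $\mathcal{V}(\tilde{\phi}^*)=\tfrac{m_e\mu^2}{2}\big(1-e^{-\tilde{\phi}^*}\big)^2-e^{\tilde{\phi}^*}+(\mu-\sqrt{\tau_i})^2+1$. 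Since $e^{\pm\tilde{\phi}^*}=e^{\pm(\frac12(\mu^2-\tau_i)+\tau_i\ln(\sqrt{\tau_i}/\mu))}$, the requirement $\mathcal{V}(\tilde{\phi}^*)>0$ becomes precisely the third inequality. Two subtleties deserve flagging: the leading-order inversion of $k_-$ must be justified (via the tiny value of $m_e$) as delivering a sufficient condition for $\mathcal{V}(\tilde{\phi}^*)>0$, and one should confirm $\tilde{\phi}^*>0$ under the first two inequalities so that the positivity test is applied at an admissible point strictly to the right of the equilibrium.
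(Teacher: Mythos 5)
Your proposal is correct and follows essentially the same path as the paper's proof: verifying the four potential-energy conditions with $k_\pm(0)=0$, obtaining $\mathcal{V}''(0)=\frac{1}{\mu^2-\tau_i}-\frac{1}{1-m_e\mu^2}$ from the inverse-function derivatives $k_\pm'(0)$, and using the same $m_e\approx 0$ inversion $\mu-k_-(\tilde{\phi}^*)=\mu e^{-\tilde{\phi}^*}$ to reduce $\mathcal{V}(\tilde{\phi}^*)>0$ to the third inequality. The only differences are cosmetic --- you extract $\sqrt{\tau_i}<\mu<1/\sqrt{m_e}$ from the branch-monotonicity requirement rather than the paper's case analysis by contradiction, and you compute $\mathcal{V}''$ via $-\mathcal{V}'=\tilde{\phi}''$ instead of differentiating the long expression directly --- and your two flagged subtleties (justifying the leading-order inversion, checking $\tilde{\phi}^*>0$) are honest caveats that apply equally to the paper's own argument.
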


\begin{proof}
For homoclinic orbits, we need wave speeds \(\mu\) such that 
\begin{equation*}
    \mathcal{V}(0) = 0, \quad \mathcal{V}'(0)=0, \quad \mathcal{V}''(0)<0, \quad \mathcal{V}(k_+^{-1}(\mu - \sqrt{\tau_i})) > 0. 
\end{equation*}
We will walk through each requirement to find the conditions under which they are met. Before starting, it is important to note that \(k_-^{-1}(0) = 0\) gives \(k_-(0)=0\). However, there is a catch in the domain of \(\mathcal{V}(\tilde{\phi})\) to get \(k_+(0) = 0\). Since \(\tilde{\phi} \in \left[k_-^{-1}(\mu - \tfrac{1}{\sqrt{m_e}}), \, k_+^{-1}(\mu - \sqrt{\tau_i})\right]\),
\begin{equation*}
    \tilde{v}_\pm = k_\pm(\tilde{\phi})\in \left[\mu - \tfrac{1}{\sqrt{m_e}}, \, \mu - \sqrt{\tau_i}\right].
\end{equation*}
Because traveling waves are asymptotically null, we need \(\tilde{v}_+ = 0\) to exist. So, we require \(\mu \geq \sqrt{\tau_i}\). 

\vspace{11pt}

\noindent\underline{\(\mathcal{V}(0) = 0\)}: By definition.

\vspace{11pt}

\noindent\underline{\(\mathcal{V}'(0) = 0\)}:
\begin{align*}
    \mathcal{V}'(0) & = \frac{k_-'(0)}{\mu - k_-(0)} + m_e k_-(0)k_-'(0) - \frac{\mu k_-'(0)}{ (\mu - k_-(0))^2} \\
     & \quad\; +\frac{\tau_i k_+'(0)}{\mu - k_+(0)} + k_+(0)k_+'(0) - \frac{\mu \tau_i k_+'(0)}{(\mu - k_+(0))^2} \\[5pt]
     & = \frac{k_-'(0)}{\mu} - \frac{k_-'(0)}{ \mu} + \frac{\tau_i k_+'(0)}{\mu} - \frac{\tau_i k_+'(0)}{\mu} \\
     & = 0.
\end{align*}

\vspace{11pt}

\noindent\underline{\(\mathcal{V}''(0) < 0\)}:
%
%
To start,
\begin{align*}
    \mathcal{V}''(0) & = \frac{(\mu - k_-(0))k_-''(0) + (k_-'(0))^2}{(\mu - k_-(0))^2} + m_e k_-(0) k_-''(0) + m_e(k_-'(0))^2 \\
    & \quad\; - \frac{\mu(k_-(0) - \mu)^2 k_-''(0) - 2\mu (k_-'(0))^2 (k_-(0) - \mu) }{(k_-(0) - \mu)^4} \\
    & \quad\; + \frac{\tau_i(\mu - k_+(0))k_+''(0) + \tau_i(k_+'(0))^2}{(\mu - k_+(0))^2} + k_+(0) k_+''(0) + (k_+'(0))^2 \\
    & \quad\; - \frac{\mu\tau_i(k_+(0) - \mu)^2 k_+''(0) - 2\mu\tau_i (k_+'(0))^2 (k_+(0) - \mu) }{(k_+(0) - \mu)^4}.
\end{align*}
We can rewrite this as 
\begin{equation*}
    \mathcal{V}''(0) = (k_-'(0))^2 (m_e - \mu^{-2}) + (k_+'(0))^2( 1 - \tau_i \mu^{-2}).
\end{equation*}
To continue our calculation, we need information about \(k_\pm'(\tilde{\phi})\). Applying Chain Rule to \eqref{eq:travwave_phi_vel+}, we know
\begin{align*}
    \frac{d}{d\tilde{\phi}}(\tilde{\phi}) & = \frac{d}{d\tilde{\phi}}\left(\mu \tilde{v}_+ - \tfrac{1}{2}(\tilde{v}_+^2) + \tau_i \ln\left(\mu-\tilde{v}_+ \right)  - \tau_i \ln\mu \right) \\[5pt]
    \implies 1 & = \left(\mu - \tilde{v}_+ - \frac{\tau_i}{\mu - \tilde{v}_+}\right)\frac{d\tilde{v}_+}{d\tilde{\phi}}.
\end{align*}
Thus, 
\begin{equation*}
    k_+'(\tilde{\phi}) = \frac{\mu - k_+(\tilde{\phi})}{(\mu - k_+(\tilde{\phi}))^2 - \tau_i} .
\end{equation*}
By \eqref{eq:travwave_phi_vel-}, we also have
\begin{align*}
    \frac{d}{d\tilde{\phi}}(\tilde{\phi}) & = \frac{d}{d\tilde{\phi}}\left(-m_e\mu \tilde{v}_- + \tfrac{m_e}{2}(\tilde{v}_-^2) - \ln\left(\mu - \tilde{v}_- \right) + \ln\mu\right) \\[5pt]
    \implies 1 & = \left(-m_e \mu + m_e \tilde{v}_- + \frac{1}{\mu - \tilde{v}_-}\right)\frac{d\tilde{v}_-}{d\tilde{\phi}}.
\end{align*}
Hence,
\begin{equation*}
    k_-'(\tilde{\phi}) = \frac{\mu - k_-(\tilde{\phi})}{-m_e(\mu - k_-(\tilde{\phi}))^2 + 1}.
\end{equation*}
This means
\begin{align*}
    \mathcal{V}''(0) & = (k_-'(0))^2 (m_e - \mu^{-2}) + (k_+'(0))^2( 1 - \tau_i \mu^{-2}) \\[5pt]
    & = \left(\frac{\mu - k_-(0)}{-m_e(\mu - k_-(0))^2 + 1}\right)^2 (m_e - \mu^{-2}) + \left(\frac{\mu - k_+(0)}{(\mu - k_+(0))^2 - \tau_i}\right)^2( 1 - \tau_i \mu^{-2}) \\[5pt]
    & = \left(\frac{\mu}{-m_e \mu^2 + 1}\right)^2 (m_e - \mu^{-2}) + \left(\frac{\mu}{\mu^2 - \tau_i}\right)^2( 1 - \tau_i \mu^{-2}) \\[5pt]
    & = \left(\frac{\mu^2}{(-m_e \mu^2 + 1)^2}\right) \left(\frac{m_e\mu^2 - 1}{\mu^2}\right) + \left(\frac{\mu^2}{(\mu^2 - \tau_i)^2}\right)\left(\frac{\mu^2 - \tau_i}{\mu^2}\right) \\[5pt]
    & = \frac{1}{m_e\mu^2 - 1} + \frac{1}{\mu^2 - \tau_i}.
\end{align*}

Because we require \(\mathcal{V}''(0) < 0\),
\begin{equation*}
     \frac{1}{\mu^2 - \tau_i} < \frac{1}{1 - m_e\mu^2}.
\end{equation*}
There are a few cases for the values of \(\mu\) under which this inequality could be true. Recall that we already require \(\mu \geq \sqrt{\tau_i}\). We now need to go further and say \(\mu > \sqrt{\tau_i}\). Alongside this, two possibilities arise. If \(1 - m_e\mu^2 < 0\), then \(\mu^2 > \tfrac{1}{m_e} > \tau_i\), and yields
\begin{align*}
     \frac{1}{\mu^2 - \tau_i} < \frac{1}{1 - m_e\mu^2} & \implies 1 - m_e\mu^2 > \mu^2 - \tau_i \\
     & \implies \frac{1 + \tau_i}{1 + m_e} > \mu^2,
\end{align*}
a contradiction as \(\tfrac{1 + \tau_i}{1 + m_e} \in [\tfrac{1}{1+m_e},\tfrac{2}{1+m_e}] < \tfrac{1}{m_e} \). Therefore, it must be that \(1 - m_e\mu^2 > 0\), i.e. \(\mu < \tfrac{1}{\sqrt{m_e}}\). Ergo,
\begin{equation*}
    \frac{1}{\mu^2 - \tau_i} < \frac{1}{1 - m_e \mu^2} \; \implies \; 1 - m_e \mu^2 < \mu^2 - \tau_i \; \implies \; \frac{1+\tau_i}{1+m_e} < \mu^2.
\end{equation*}
This implies that the traveling wave speed, \(\mu\), must be faster than the sonic speed for a two-fluid plasma defined in \cite{kelting}. In Figure~\ref{fig:mu_bounds}, we provide a diagram of this boundary (blue dashed line) alongside the following constraint. 

\vspace{11pt}

\noindent \underline{\(\mathcal{V}(k_+^{-1}(\mu - \sqrt{\tau_i})) > 0\)}:
First, define \(k_+^{-1}(\mu - \sqrt{\tau_i}) \coloneqq \tilde{\phi}^*\). Then
\begin{align*}
    \mathcal{V}(\tilde{\phi}^*) & = -\ln( \mu - k_-(\tilde{\phi}^*)) + \tfrac{m_e}{2}(k_-(\tilde{\phi}^*))^2 - \frac{\mu}{\mu - k_-(\tilde{\phi}^*)} + ( \ln\mu + 1) \\
        & \quad\; - \tau_i \ln (\mu - k_+(\tilde{\phi}^*) ) + \tfrac{1}{2}(k_+(\tilde{\phi}^*))^2 - \frac{\mu \tau_i}{\mu - k_+(\tilde{\phi}^*)} + \tau_i (\ln\mu + 1) \\[5pt]
        & = -\ln( \mu - k_-(\tilde{\phi}^*)) + \tfrac{m_e}{2}(k_-(\tilde{\phi}^*))^2 - \frac{\mu}{\mu - k_-(\tilde{\phi}^*)} + ( \ln\mu + 1) \\
        & \quad\; - \tfrac{1}{2}\tau_i \ln \tau_i + \tfrac{1}{2}(\mu - \sqrt{\tau_i})^2 - \mu \sqrt{\tau_i} + \tau_i (\ln\mu + 1)
\end{align*}
Unfortunately, \(k_-(\tilde{\phi}^*)\) is not an easy calculation. But, we can make an estimate. Let us define \(k_-(\tilde{\phi}^*)\coloneqq \tilde{v}_-^*\). Then \(\tilde{\phi}^* = k_-^{-1}(\tilde{v}_-^*)\). From \eqref{eq:travwave_phi_vel+} and \eqref{eq:travwave_phi_vel-}, we know this equality is the same as
\begin{equation*}
    \tfrac{1}{2} \left(\mu ^2 - \tau_i\right) + \tau_i \ln \left(\tfrac{\sqrt{\tau_i}}{\mu }\right) = -m_e\mu \tilde{v}_-^* + \tfrac{m_e}{2}(\tilde{v}_-^*)^2 - \ln\left(\mu - \tilde{v}_-^* \right) + \ln\mu.
\end{equation*}
Thus,
\begin{equation*}
    m_e\mu \tilde{v}_-^* - \tfrac{m_e}{2}(\tilde{v}_-^*)^2 + \ln\left(\mu - \tilde{v}_-^* \right) + \tfrac{1}{2} \left(\mu ^2 - \tau_i\right) + \tau_i \ln \left(\tfrac{\sqrt{\tau_i}}{\mu }\right) - \ln\mu = 0.
\end{equation*}
We now make a rough estimate since \(m_e\approx 0\),
\begin{align*}
    0 & = m_e\mu \tilde{v}_-^* - \tfrac{m_e}{2}(\tilde{v}_-^*)^2 + \ln\left(\mu - \tilde{v}_-^* \right) + \tfrac{1}{2} \left(\mu ^2 - \tau_i\right) + \tau_i \ln \left(\tfrac{\sqrt{\tau_i}}{\mu }\right) - \ln\mu \\
    & \approx \ln\left(\mu - \tilde{v}_-^* \right) + \tfrac{1}{2} \left(\mu ^2 - \tau_i\right) + \tau_i \ln \left(\tfrac{\sqrt{\tau_i}}{\mu }\right) - \ln\mu.
\end{align*}
Solving for \(\tilde{v}_-^*\), we find
\(
    \tilde{v}_-^* = \mu - \mu e^{-\tfrac{1}{2} \left(\mu ^2 - \tau_i\right) - \tau_i \ln \left(\tfrac{\sqrt{\tau_i}}{\mu }\right)}.
\)
Hence, \(\mathcal{V}(\tilde{\phi}^*)\) becomes
\begin{align*}
    \mathcal{V}(\tilde{\phi}^*) & = -\ln\left( \mu e^{-\tfrac{1}{2} \left(\mu ^2 - \tau_i\right) - \tau_i \ln \left(\tfrac{\sqrt{\tau_i}}{\mu }\right)}\right) + \frac{m_e}{2}\left(\mu - \mu e^{-\tfrac{1}{2} \left(\mu ^2 - \tau_i\right) - \tau_i \ln \left(\tfrac{\sqrt{\tau_i}}{\mu }\right)}\right)^2 \\
    & \quad\; - \frac{\mu}{\mu e^{-\tfrac{1}{2} \left(\mu ^2 - \tau_i\right) - \tau_i \ln \left(\tfrac{\sqrt{\tau_i}}{\mu }\right)}} - \tfrac{1}{2}\tau_i \ln \tau_i + \tfrac{1}{2}(\mu - \sqrt{\tau_i})^2 - \mu \sqrt{\tau_i} + (1+\tau_i) (\ln\mu + 1).
\end{align*}
Simplifying, we find we need \(\mu\) such that
\begin{equation*}
    \frac{m_e \mu^2}{2}\left(1 - e^{-\tfrac{1}{2} \left(\mu ^2 - \tau_i\right) - \tau_i \ln \left(\tfrac{\sqrt{\tau_i}}{\mu }\right)}\right)^2 - e^{\tfrac{1}{2} \left(\mu ^2 - \tau_i\right) + \tau_i \ln \left(\frac{\sqrt{\tau_i}}{\mu }\right)} + (\mu - \sqrt{\tau_i})^2 + 1 > 0.
\end{equation*}

Taken as a function of the wave speed \(\mu\), we can plot \(\mathcal{V}(\tilde{\phi}^*)\). The graphs for both hot and cold plasmas are given in Figure~\ref{fig:mu_bounds} (black line). Utilizing these visuals, we see that the bounds for \(\mu\) correspond to the roots of \(\mathcal{V}(\tilde{\phi}^*)\). According to Wolfram Mathematica's \texttt{FindRoot} function, for a cold plasma, \(0~<~\mu~<~1.58535\), and for a hot plasma, \(1~<~\mu~<~1.56991\). The lower bound is consistent with the value of \(\sqrt{\tau_i}\).

\begin{figure}[h]
     \centering
     \begin{subfigure}[b]{0.48\textwidth}
         \centering
         \includegraphics[width=\textwidth, trim={0 0 0 0}, clip]{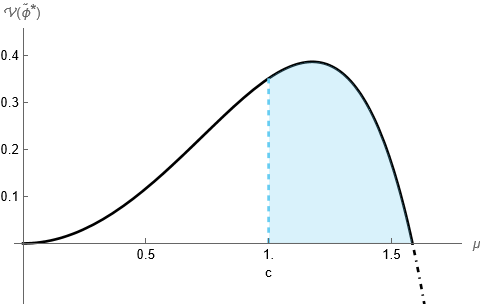}
         \caption{A cold plasma, \(\tau_i=0\)}
         \label{fig:mubound0}
     \end{subfigure}
     \hfill
     \begin{subfigure}[b]{0.48\textwidth}
         \centering
         \includegraphics[width=\textwidth, trim={0 9pt 0 0}, clip]{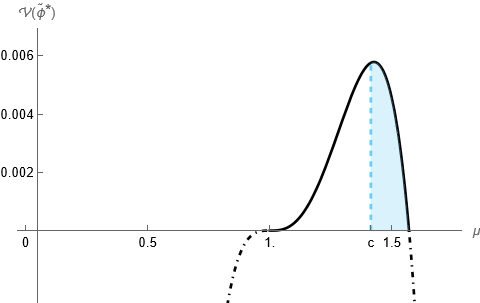}
         \caption{A hot plasma, \(\tau_i=1\)}
         \label{fig:mubound1}
     \end{subfigure}
     \caption{A visual depiction of the \(\mu\) values yielding \(\mathcal{V}(\tilde{\phi}^*)>0\) (black) and \(\mu > c \coloneqq \sqrt{\frac{1+\tau_i}{1+m_e}} \) (blue, dashed) in the two-fluid system.}
        \label{fig:mu_bounds}
\end{figure}
\end{proof}

\subsection{It's just a phase! (space)}

We will now use the \(\mu\) bounds we found in Theorem~\ref{thm:PE_homoclinic} to plot the potential energy function, phase space, and traveling wave in one dimension. To start, we establish an approximation for \(k_\pm(\tilde{\phi})\). We will use a process akin to Newton's Root Finding Method. All we have to do is solve \(k_\pm^{-1}(\tilde{v}_\pm) - \tilde{\phi} = 0\) for \(\tilde{v}_\pm\) as it is equivalent to evaluating \(k_\pm(\tilde{\phi})\). The iterative scheme is 
\begin{equation}\label{eq:newton_scheme}
    \tilde{v}_\pm^{(N+1)} = \tilde{v}_\pm^{(N)} - \frac{k_\pm^{-1}\left(\tilde{v}_\pm^{(N)} \right) - \tilde{\phi}}{d k_\pm^{-1} \left(\tilde{v}_\pm^{(N)} \right)}
\end{equation}
where \(d k_\pm^{-1}(\tilde{v}_\pm )\) is the derivative of \(k_\pm^{-1}(\tilde{v}_\pm )\) with respect to \(\tilde{v}_\pm\). 

For an initial approximation, we define the line between the points \(\left(k_\pm^{-1}(r_0),r_0 \right)\) and \(\left(k_\pm^{-1}(r_1),r_1\right)\),
\begin{equation*}
    \tilde{v}_\pm^{(0)} = r_0 + \frac{r_1 - r_0}{k_\pm^{-1}(r_1) - k_\pm^{-1}(r_0)}\left(\tilde{\phi} - k_\pm^{-1}(r_0)\right)
\end{equation*}
We will take \(r_0 = k_-^{-1}\left(\mu - \tfrac{1}{\sqrt{m_e}}\right)\) and \(r_1 = k_+^{-1}\left(\mu - \sqrt{\tau_i}\right)\) in our calculations. Overall, we find that \(N=5\) gives a sufficient approximation to \(k_\pm(\tilde{\phi})\). As an example with \(\tau_i=1\) and \(\mu=1.5\), we include the approximations for \(k_+(\tilde{\phi})\) at \(N=0\), \(N=1\), and \(N=5\) in Figure~\ref{fig:newton_k+} alongside \(k_+^{-1}(\tilde{v})\) from \eqref{eq:travwave_phi_vel+}.
\begin{figure}[h]
    \centering
    \includegraphics[width = 0.8\textwidth]{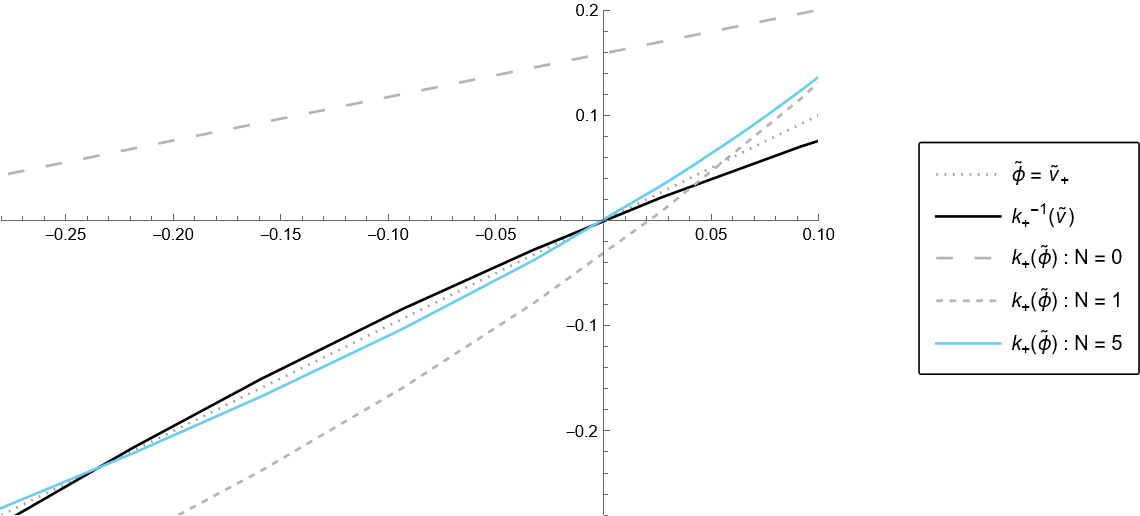}
    \caption{A Newton iterative approximation to \(k_+(\tilde{\phi})\) with \(\mu=1.5\) and \(\tau_i=1\).}
    \label{fig:newton_k+}
\end{figure}

Following our \(k_\pm(\tilde{\phi})\) calculation, we cam establish \(\mathcal{V}(\tilde{\phi})\) and \(\tilde{\phi}'(\xi)\). We then get \(\tilde{\phi}(\xi)\) by utilizing Wolfram Mathematica's \texttt{ParametricNDSolve} command \cite{Mathematica}. From Theorem \ref{thm:PE_homoclinic}, we know \(\mu~=~1.5\) meets the constraints on \(\mu\) to have a traveling wave in one spatial dimension. The potential energy function, corresponding homoclinic orbit, and resultant solitary wave for this wave speed is shown in Figure \ref{fig:hot_phase}. We also have that \(\mu~=~1.6\) does not meet the conditions of Theorem \ref{thm:PE_homoclinic}. To be specific, at this wave speed, \(\mathcal{V}(\tilde{\phi}^*) < 0\). The state in phase space is then an incomplete orbit that does not yield a solitary wave, as seen in Figure \ref{fig:hot_phase_broken}.

\begin{figure}[h]
     \centering
     \begin{subfigure}[b]{0.33\textwidth}
         \centering
         \includegraphics[width=\textwidth, trim={0 0 0 15pt}, clip]{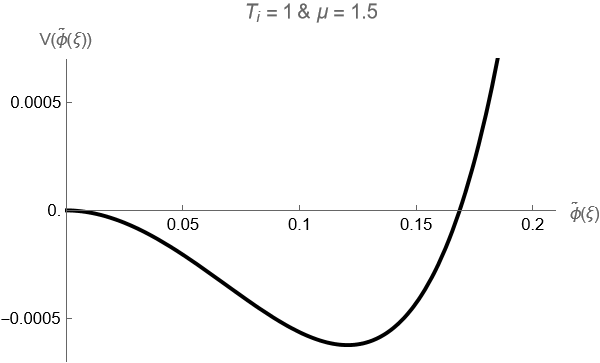}
         \caption{Potential Energy, \(\mathcal{V}(\tilde{\phi})\)}
         \label{fig:PE1}
     \end{subfigure}
     \hfill
     \begin{subfigure}[b]{0.32\textwidth}
         \centering
         \includegraphics[width=\textwidth, trim={0 0 0 15pt}, clip]{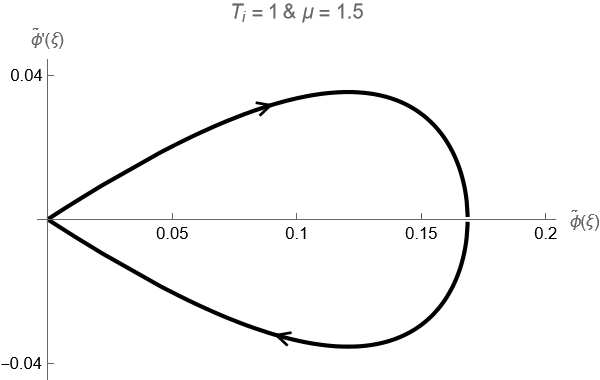}
         \caption{Homoclinic Orbit, \(\tilde{\phi}'(\xi)\)}
         \label{fig:orbit1}
     \end{subfigure}
     \hfill
     \begin{subfigure}[b]{0.32\textwidth}
         \centering
         \includegraphics[width=\textwidth, trim={0 0 0 15pt}, clip]{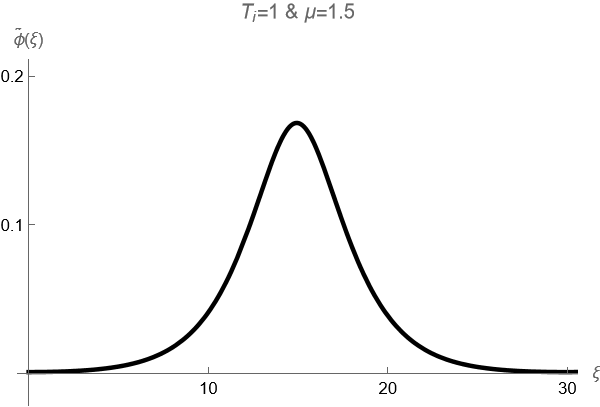}
         \caption{Solitary Wave, \(\tilde{\phi}(\xi)\)}
         \label{fig:wave1}
     \end{subfigure}
        \caption{A hot two-fluid plasma \((\tau_i = 1)\) with wave speed \(\mu =1.5\), utilizing Newton's Method to approximate \(k_\pm(\tilde{\phi}(\xi))\).}
        \label{fig:hot_phase}
\end{figure}

\begin{figure}[h]
     \centering
     \begin{subfigure}[b]{0.33\textwidth}
         \centering
         \includegraphics[width=\textwidth, trim={0 0 0 15pt}, clip]{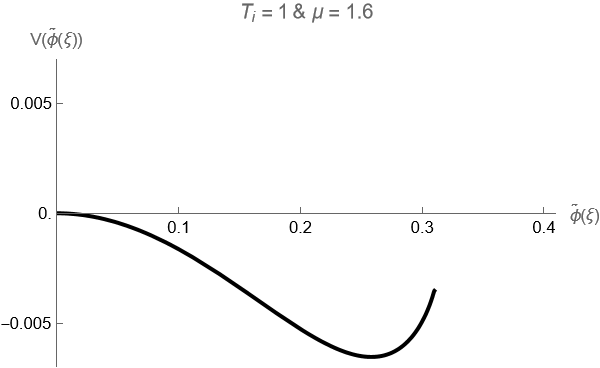}
         \caption{Potential Energy, \(\mathcal{V}(\tilde{\phi})\)}
         \label{fig:PE2}
     \end{subfigure}
     \hfill
     \begin{subfigure}[b]{0.32\textwidth}
         \centering
         \includegraphics[width=\textwidth, trim={0 0 0 15pt}, clip]{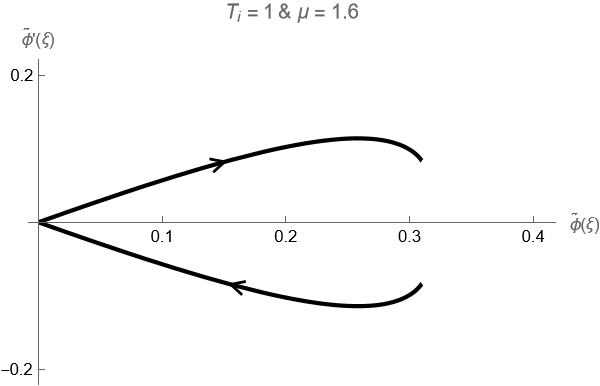}
         \caption{Incomplete Orbit, \(\tilde{\phi}'(\xi)\)}
         \label{fig:orbit2}
     \end{subfigure}
     \hfill
     \begin{subfigure}[b]{0.32\textwidth}
         \centering
         \includegraphics[width=\textwidth, trim={0 0 0 15pt}, clip]{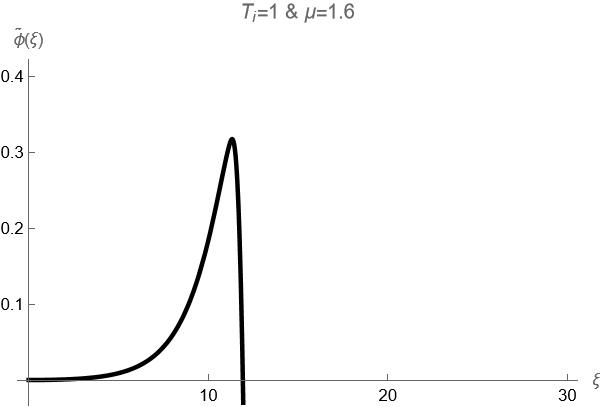}
         \caption{Incomplete Wave, \(\tilde{\phi}(\xi)\)}
         \label{fig:wave2}
     \end{subfigure}
        \caption{A hot two-fluid plasma \((\tau_i = 1)\) with wave speed \(\mu =1.6\), utilizing Newton's Method to approximate \(k_\pm(\tilde{\phi}(\xi))\).}
        \label{fig:hot_phase_broken}
\end{figure}

\subsection{Traveling Waves in a single-fluid plasma}

Haragus and Scheel have already established cold, single-fluid plasmas possess solitary wave solutions \cite{haragus}. However, to our knowledge, hot single-fluid plasmas have not been studied. Thus, we will do so here. We will perform the same process as before to 
\begin{equation}
    \label{eq::3hotrescale}
    \begin{aligned}
        \partial_t n_+ + \partial_x(n_+ v_+) = 0, \qquad
        \partial_t v_+ + \tfrac{1}{2}\partial_x v_+^2 + \tau_i \partial_x \ln(n_+) + \partial_x\phi = 0, \qquad
        \partial_{xx}\phi - e^{\phi} + n_+ = 0,
    \end{aligned}
\end{equation}
the single-fluid Euler-Poisson system for plasma. Implementing a traveling wave ansatz, we obtain results identical to \eqref{eq:travwave_dens_vel} with a positive charge, and \eqref{eq:travwave_phi_vel+}, yielding an equivalent \(k_+(\tilde{\phi})\).
%
The three-component partial differential system becomes the ordinary differential equation
\begin{equation}\label{eq:3travwave_ode}
\tilde{\phi}'' = e^{\tilde{\phi}} - 1 - \frac{k_+(\tilde{\phi})}{\mu-k_+(\tilde{\phi})}.
\end{equation}
We then multiply this by \(\tilde{\phi}'\) and integrate it to procure our potential energy system for a single-fluid plasma. After performing the corresponding analysis, we find wave speeds for which the system has traveling waves. The results are given in Theorem~\ref{thm:3PE_homoclinic} and we provide a pictorial representation in Figure~\ref{fig:3hot_phase} for a hot plasma with speed \(\mu=1.5\).

\begin{center}{\rule{4cm}{0.4pt}}\end{center}

\begin{theorem}\label{thm:3PE_homoclinic}
    The ``potential energy'' system for a single-fluid plasma,
    \begin{align*}
        \frac{(\tilde{\phi}')^2}{2} & = -\mathcal{V}(\tilde{\phi}), \\[5pt]
        \mathcal{V}(\tilde{\phi}) & = -e^{\tilde{\phi}} + \tilde{\phi} - \tau_i \ln (\mu - k_+(\tilde{\phi}) ) + \tfrac{1}{2}(k_+(\tilde{\phi}))^2 - \frac{\mu \tau_i}{\mu - k_+(\tilde{\phi})} + \tau_i (\ln\mu + 1) + 1, \\[5pt]
        \tilde{\phi} & \in \left[-\infty, \, k_+^{-1}(\mu - \sqrt{\tau_i})\right] 
    \end{align*}
    acquires homoclinic orbits for wave speeds \(\mu\) such that
    \begin{align*}
         \mu & > \sqrt{1 +\tau_i} \\
         -e^{\frac{1}{2}\left(\mu^2 - \tau_i\right) + \tau_i \ln\left(\frac{\sqrt{\tau_i}}{\mu}\right)} & + \left(\mu - \sqrt{\tau_i}\right)^2 + 1 > 0
    \end{align*}
\end{theorem}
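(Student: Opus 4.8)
The plan is to mirror the four-step potential-energy argument used for Theorem~\ref{thm:PE_homoclinic}, exploiting the fact that the single-fluid Poisson law $\partial_{xx}\phi - e^{\phi} + n_+ = 0$ replaces the implicit electron branch $k_-$ by the explicit Boltzmann term $e^{\tilde{\phi}}$. Concretely, I would verify the homoclinic conditions $\mathcal{V}(0)=0$, $\mathcal{V}'(0)=0$, $\mathcal{V}''(0)<0$, and $\mathcal{V}(\tilde{\phi}^*)>0$ with $\tilde{\phi}^* = k_+^{-1}(\mu-\sqrt{\tau_i})$ the right endpoint of the domain. As in the two-fluid case, asymptotic nullness forces $k_+(0)=0$ and hence the standing requirement $\mu \geq \sqrt{\tau_i}$; moreover the ion-velocity relation \eqref{eq:travwave_phi_vel+} and the derivative $k_+'(\tilde{\phi}) = (\mu - k_+(\tilde{\phi}))/((\mu - k_+(\tilde{\phi}))^2 - \tau_i)$ already established in the proof of Theorem~\ref{thm:PE_homoclinic} carry over verbatim, so no new implicit-function work is needed for the ion fluid.

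First I would dispatch $\mathcal{V}(0)=0$ and $\mathcal{V}'(0)=0$ by direct substitution: the new term $-e^{\tilde{\phi}}+\tilde{\phi}$ contributes $-1$ to $\mathcal{V}(0)$ and $-e^{\tilde{\phi}}+1$ to $\mathcal{V}'$, both vanishing appropriately at $\tilde{\phi}=0$, while the $k_+$-dependent terms cancel exactly as in the two-fluid computation. The first nontrivial constraint emerges from $\mathcal{V}''(0)<0$. Differentiating once more, the electron term contributes $-e^{0} = -1$ and the ion terms reduce, using $k_+'(0) = \mu/(\mu^2-\tau_i)$, to $(k_+'(0))^2(1 - \tau_i\mu^{-2}) = 1/(\mu^2-\tau_i)$, so that $\mathcal{V}''(0) = -1 + 1/(\mu^2 - \tau_i)$. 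Requiring this to be negative yields exactly $\mu > \sqrt{1+\tau_i}$, the first inequality of the theorem; I would also remark that $\sqrt{1+\tau_i}$ is precisely the $m_e\to 0$ limit of the two-fluid sonic bound $\sqrt{(1+\tau_i)/(1+m_e)}$, as one expects of the single-fluid reduction.

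The final step, $\mathcal{V}(\tilde{\phi}^*) > 0$, is where the single-fluid case is actually \emph{cleaner} than Theorem~\ref{thm:PE_homoclinic}: because $k_-$ no longer appears, there is no implicit electron value $\tilde v_-^*$ to estimate through a small-$m_e$ expansion, so the computation becomes exact. I would first evaluate $\tilde{\phi}^*$ in closed form by plugging $\tilde{v}_+ = \mu - \sqrt{\tau_i}$ into \eqref{eq:travwave_phi_vel+}, obtaining $\tilde{\phi}^* = \tfrac{1}{2}(\mu^2 - \tau_i) + \tau_i\ln(\sqrt{\tau_i}/\mu)$. Substituting this together with $k_+(\tilde{\phi}^*) = \mu - \sqrt{\tau_i}$ into $\mathcal{V}$, the $\tau_i\ln\tau_i$ and $\tau_i\ln\mu$ contributions cancel and the remaining algebraic terms collapse to $(\mu - \sqrt{\tau_i})^2$, leaving $\mathcal{V}(\tilde{\phi}^*) = -e^{\tilde{\phi}^*} + (\mu - \sqrt{\tau_i})^2 + 1$, which is exactly the second inequality once $e^{\tilde{\phi}^*}$ is written out. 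The main obstacle here is purely the bookkeeping of that collapse, i.e. tracking the several $\ln\tau_i$, $\ln\mu$, and $\mu\sqrt{\tau_i}$ terms so they cancel correctly, rather than any conceptual estimate, since the Boltzmann closure eliminates the approximation that was unavoidable in the two-fluid proof.
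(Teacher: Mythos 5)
Your proposal is correct and follows exactly the route the paper intends: the paper omits this proof as ``nearly identical'' to that of Theorem~\ref{thm:PE_homoclinic}, and your four-step verification (with $\mathcal{V}''(0) = -1 + \tfrac{1}{\mu^2-\tau_i}$ giving $\mu > \sqrt{1+\tau_i}$, and $\mathcal{V}(\tilde{\phi}^*) = -e^{\tilde{\phi}^*} + (\mu-\sqrt{\tau_i})^2 + 1$ giving the second inequality) is precisely that adaptation, with all computations checking out. Your added observation that the Boltzmann closure makes the endpoint evaluation exact --- eliminating the small-$m_e$ estimate for $\tilde{v}_-^*$ that the two-fluid proof required --- is a genuine improvement in rigor over the two-fluid argument, and is consistent with the paper's remark that the single-fluid result is the $m_e = 0$ specialization of Theorem~\ref{thm:PE_homoclinic}.
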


\begin{proof}
The proof is nearly identical to that of Theorem~\ref{thm:PE_homoclinic} and is therefore omitted.
\end{proof}

\begin{center}{\rule{4cm}{0.4pt}}\end{center}

\begin{figure}[h]
     \centering
     \begin{subfigure}[b]{0.33\textwidth}
         \centering
         \includegraphics[width=\textwidth, trim={0 0 0 15pt}, clip]{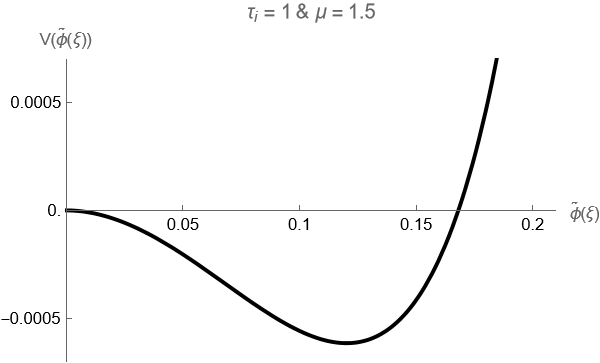}
         \caption{Potential Energy, \(\mathcal{V}(\tilde{\phi})\)}
         \label{fig:PE3}
     \end{subfigure}
     \hfill
     \begin{subfigure}[b]{0.32\textwidth}
         \centering
         \includegraphics[width=\textwidth, trim={0 0 0 15pt}, clip]{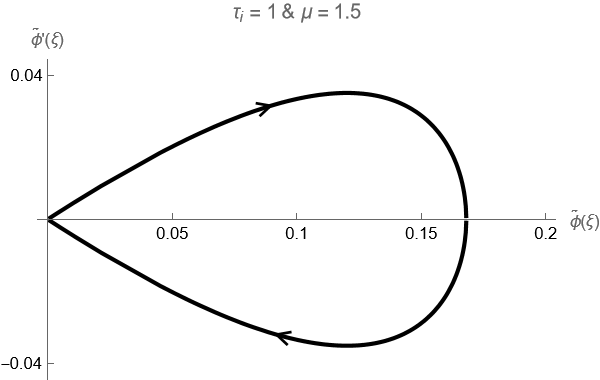}
         \caption{Homoclinic Orbit, \(\tilde{\phi}'(\xi)\)}
         \label{fig:orbit3}
     \end{subfigure}
     \hfill
     \begin{subfigure}[b]{0.32\textwidth}
         \centering
         \includegraphics[width=\textwidth, trim={0 0 0 15pt}, clip]{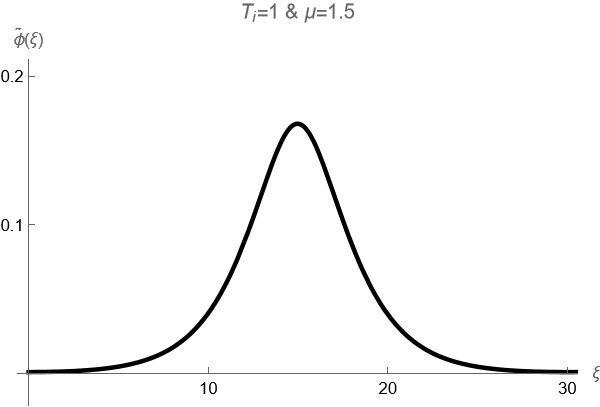}
         \caption{Solitary Wave, \(\tilde{\phi}(\xi)\)}
         \label{fig:wave3}
     \end{subfigure}
        \caption{A hot, single-fluid plasma \((\tau_i = 1, \; m_e=0)\) with wave speed \(\mu =1.5\), utilizing Newton's Method to approximate \(k_+(\tilde{\phi}(\xi))\).}
        \label{fig:3hot_phase}
\end{figure}

\noindent Observe that these results are identical to the two-fluid system if we set \(m_e=0\).
\section{Simulations}\label{simulations}

We will now leverage computer simulations to validate our theoretical results. This approach combines the rigor of mathematical analysis with the practicality of computational methods, allowing us to explore the intricacies of plasma behavior and verify the validity of our theorems. Through this process, we aim to gain new insights that may not be readily apparent through traditional proof techniques alone.

\subsection{The Blueprint}

We begin with assigning \(n_\pm \to 1 + n_\pm\) so all variables are asymptotically null. Separating time and spatial variables, we continue with breaking \eqref{eq::5hotrescale} into its linear and nonlinear pieces. We will also incorporate the linear Taylor expansion of the logarithmic term with coefficient \(\rho\) that depends on the value of \(\tau_i\). The system is
    \begin{equation*}
            \begin{aligned}
                \partial_t n_\pm & = -\partial_x v_\pm -\partial_x(n_\pm v_\pm) ,
                 \\[4pt]
                %
                %
                \partial_t v_+ & = -\rho\partial_x n_+  - \partial_x\phi - \tfrac{1}{2}\partial_x v_+^2 - \tau_i \partial_x \ln(1 + n_+) + \rho\partial_x n_+,
                \\[4pt]
                \partial_t v_- & = -\tfrac{1}{m_e}\partial_x n_- + \tfrac{1}{m_e} \partial_x\phi - \tfrac{1}{2}\partial_x v_-^2 -\tfrac{1}{m_e}\partial_x \ln(1 + n_-) + \tfrac{1}{m_e}\partial_x n_- ,
                 \\[4pt]
                \partial_{xx}\phi & =  n_- - n_+.
            \end{aligned}
        \end{equation*}

To make it easier to solve, we will use four variables instead of five. Define 
\begin{equation}\label{sim_E_S_def}
    \partial_x E \coloneqq n_- - n_+, \quad S \coloneqq n_- + n_+.
\end{equation}
Then 
\begin{equation}\label{sim_dens_ES_def}
    n_+ = \tfrac{1}{2}(S - \partial_x E), \quad n_- = \tfrac{1}{2}(S + \partial_x E),
\end{equation}
and our new Euler-Poisson system is 
\begin{subequations}
    \label{eq:EP_SE_sys}
    \begin{align}
    \label{eq:dtS}
        \partial_t S & = -\partial_x \left(v_- + v_+\right) - \tfrac{1}{2}\partial_x\left(\left(S + \partial_x E \right)v_- + \left(S - \partial_x E \right)v_+ \right) ,
        \\[8pt]
    \label{eq:dtE}
        \partial_t E & = -v_- + v_+ - \tfrac{1}{2}\left(\left(S + \partial_x E \right)v_- - \left(S - \partial_x E \right)v_+ \right),
        \\[8pt]
    \label{eq:dtv+}
        \partial_t v_+ & = -\tfrac{\rho}{2}\partial_x S + \left(\tfrac{\rho}{2} \partial_x^2 - 1\right)E - \partial_x\left(\tfrac{1}{2} v_+^2 + \tau_i \ln\left(1+ \tfrac{S - \partial_x E}{2} \right) - \tfrac{\rho}{2} \left(S - \partial_x E\right) \right),
        \\[8pt]
    \label{eq:dtv-}
        \partial_t v_- & =\tfrac{1}{m_e}\left( - \tfrac{1}{2}\partial_x S - \left(\tfrac{1}{2}\partial_x^2 - 1\right)E\right)  - \tfrac{1}{m_e}\partial_x \left(\tfrac{m_e}{2} v_-^2 + \ln\left(1+ \tfrac{S + \partial_x E}{2}\right) - \tfrac{1}{2} \left(S + \partial_x E\right) \right).
    \end{align}
\end{subequations}
We will integrate each equation with respect to time, \(t\), from \(t_1\) to \(t_2\). Regarding the evaluation at these boundary points, variables are given a corresponding time subscript: \(var(x,t_2)~\coloneqq~var_2\), \(var(x,t_1)~\coloneqq~var_1\).

The integration yields
    \begin{align*}
        S_2 - S_1 & = \int_{t_1}^{t_2} \left(-\partial_x \left(v_- + v_+\right) - \tfrac{1}{2}\partial_x\left(\left(S + \partial_x E \right)v_- + \left(S - \partial_x E \right)v_+ \right) \right) dt, \\[4pt]
        E_2 - E_1 & = \int_{t_1}^{t_2} \left( -v_- + v_+ - \tfrac{1}{2}\left(\left(S + \partial_x E \right)v_- - \left(S - \partial_x E \right)v_+ \right) \right) dt,
        \\[4pt]
        v_{+,2} - v_{+,1} & = \int_{t_1}^{t_2} \left(\right. -\tfrac{\rho}{2}\partial_x S + \left(\tfrac{\rho}{2} \partial_x^2 - 1\right)E \\
        & \qquad\qquad - \partial_x\left(\tfrac{1}{2} v_+^2 + \tau_i \ln\left(1+ \tfrac{S - \partial_x E}{2} \right) - \tfrac{\rho}{2} \left(S - \partial_x E\right) \right) \left.\right)dt,
        \\[4pt]
        v_{-,2} - v_{-,1} & = \int_{t_1}^{t_2} \left(\right. \tfrac{1}{m_e}\left(- \tfrac{1}{2}\partial_x S - \left(\tfrac{1}{2}\partial_x^2 - 1\right)E \right) \\
        & \qquad\qquad - \tfrac{1}{m_e}\partial_x \left(\tfrac{m_e}{2} v_-^2 + \ln\left(1+ \tfrac{S + \partial_x E}{2}\right) - \tfrac{1}{2} \left(S + \partial_x E\right) \right) \left.\right) dt .
    \end{align*}
While the integrands containing time derivatives were straightforward, the pieces with spatial derivatives will need an approximation. Define \(D \coloneqq \partial_x\) and \(\Delta t := t_2 - t_1\). Then we apply the Leibniz Integral Rule and Trapezoidal rule,
\begin{equation*}
\int_{t_1}^{t_2} \partial_x var(x,t) dt = \partial_x \left( \int_{t_1}^{t_2} var(x,t) dt\right) \approx \tfrac{D \Delta t}{2}\left(var_2 + var_1 \right).
\end{equation*}
Following, our system of integrals becomes
    \begin{align*}
        S_2 - S_1 & = \tfrac{\Delta t}{2} \left(-D\left(v_{-,2} + v_{+,2}\right) - \tfrac{1}{2}D\left(\left(S_2 + D E_2 \right) v_{-,2} + \left(S_2 - D E_2 \right)v_{+,2}\right)\right)\\
        & \, + \tfrac{\Delta t}{2} \left(-D\left(v_{-,1} + v_{+,1}\right) - \tfrac{1}{2}D\left(\left(S_1 + D E_1 \right) v_{-,1} + \left(S_1 - D E_1 \right)v_{+,1}\right) \right) ,
        \\[8pt]
        E_2 - E_1 & = \tfrac{\Delta t}{2} \left( -v_{-,2} + v_{+,2} - \tfrac{1}{2}\left(\left(S_2 + D E_2 \right) v_{-,2} - \left(S_2 - D E_2 \right)v_{+,2} \right) \right)
        \\ 
        & \, + \tfrac{\Delta t}{2} \left( -v_{-,1} + v_{+,1} - \tfrac{1}{2}\left(\left(S_1 + D E_1 \right) v_{-,1} - \left(S_1 - D E_1 \right)v_{+,1} \right) \right) ,
        \\[8pt]
        v_{+,2} - v_{+,1} & = \tfrac{\Delta t}{2} \left(\right.-\tfrac{\rho}{2}D S_2 + \left(\tfrac{\rho}{2} D^2 - 1\right)E_2 \\
        & \qquad\quad\, - D\left(\tfrac{1}{2} v_{+,2}^2 + \tau_i \ln(1+ \tfrac{S_2 - D E_2}{2} ) - \tfrac{\rho}{2} \left(S_2 - D E_2\right) \right) \left.\right)
        \\ 
        & \, + \tfrac{\Delta t}{2} \left(\right.-\tfrac{\rho}{2}D S_1 + \left(\tfrac{\rho}{2} D^2 - 1\right)E_1 \\
        & \qquad\quad\, - D\left(\tfrac{1}{2} v_{+,1}^2 + \tau_i \ln(1+ \tfrac{S_1 - D E_1}{2} ) - \tfrac{\rho}{2} \left(S_1 - D E_1\right) \right) \left.\right) ,
        \\[8pt]
        v_{-,2} - v_{-,1} & = \tfrac{\Delta t}{2}\left(\right. \tfrac{1}{m_e}\left(- \tfrac{1}{2} D S_2 - \left(\tfrac{1}{2}D^2 - 1\right)E_2\right) \\
        & \qquad\quad\, - \tfrac{1}{m_e}D\left(\tfrac{m_e}{2} v_{-,2}^2 + \ln(1+ \tfrac{S_2 + D E_2}{2}) - \tfrac{1}{2} \left(S_2 + D E_2\right) \right) \left.\right) \\
        & \, + \tfrac{\Delta t}{2}\left(\right. \tfrac{1}{m_e}\left(-\tfrac{1}{2} D S_1 - \left(\tfrac{1}{2}D^2 - 1 \right)E_1 \right) \\
        & \qquad\quad\, - \tfrac{1}{m_e}D\left(\tfrac{m_e}{2} v_{-,1}^2 + \ln(1+ \tfrac{S_1 + D E_1}{2}) - \tfrac{1}{2} \left(S_1 + D E_1\right) \right) \left.\right) .
    \end{align*}

We will move the future time steps to the left side of the equations and put the current on the right, keeping the nonlinear pieces on the right-hand side as well. Written as a system of matrices, this is
\begin{align}
& \nonumber 
\begin{bmatrix}
1 & 0 & \tfrac{1}{2} D\Delta t & \tfrac{1}{2} D\Delta t \\[4pt]
0 & 1 & -\tfrac{1}{2} \Delta t & \tfrac{1}{2} \Delta t \\[4pt]
\tfrac{\rho}{4} D\Delta t & \tfrac{1}{2}\Delta t\left(1 - \tfrac{\rho}{2} D^2 \right) & 1 & 0 \\[4pt]
\tfrac{1}{4m_e} D\Delta t & -\tfrac{1}{2m_e}\Delta t\left(1 - \tfrac{1}{2} D^2 \right) & 0 & 1 \\[4pt]
\end{bmatrix}
\begin{bmatrix}
S_2 \\[4pt]
E_2 \\[4pt]
v_{+,2} \\[4pt]
v_{-,2} \\[4pt]
\end{bmatrix} \\[10pt]
& \nonumber  =
\begin{bmatrix}
1 & 0 & -\tfrac{1}{2} D\Delta t & -\tfrac{1}{2} D\Delta t \\[4pt]
0 & 1 & \tfrac{1}{2} \Delta t & -\tfrac{1}{2} \Delta t \\[4pt]
-\tfrac{\rho}{4} D\Delta t & -\tfrac{1}{2}\Delta t\left(1 - \tfrac{\rho}{2} D^2 \right) & 1 & 0 \\[4pt]
-\tfrac{1}{4m_e} D\Delta t & \tfrac{1}{2m_e}\Delta t\left(1 - \tfrac{1}{2} D^2 \right) & 0 & 1 \\[4pt]
\end{bmatrix}
\begin{bmatrix}
S_1 \\[4pt]
E_1 \\[4pt]
v_{+,1} \\[4pt]
v_{-,1} \\[4pt]
\end{bmatrix} \\[10pt]
\nonumber & \quad +
\begin{bmatrix}
- \tfrac{1}{4} D\Delta t\left( \left(S_2 + D E_2\right) v_{-,2} + \left(S_2 - D E_2\right) v_{+,2} \right) \\[4pt]
-\tfrac{1}{4} \Delta t \left(\left(S_2 + D E_2\right) v_{-,2} - \left(S_2 - D E_2\right) v_{+,2} \right) \\[4pt]
- \tfrac{1}{2} D\Delta t \left(\tfrac{1}{2} v_{+,2}^2 + \tau_i \ln(1+ \tfrac{S_2 - D E_2}{2} ) - \tfrac{\rho}{2} \left( S_2 - D E_2 \right) \right) \\[4pt]
-\tfrac{1}{2 m_e} D\Delta t \left(\tfrac{m_e}{2} v_{-,2}^2 + \ln(1+ \tfrac{S_2 + D E_2}{2}) - \tfrac{1}{2}\left(S_2 + D E_2 \right) \right) \\[4pt]
\end{bmatrix} \nonumber \\[10pt]
& \quad +
\begin{bmatrix}
-\tfrac{1}{4} D\Delta t \left( \left(S_1 + D E_1\right) v_{-,1} + \left(S_1 - D E_1\right) v_{+,1} \right) \\[4pt]
-\tfrac{1}{4} \Delta t \left(\left(S_1 + D E_1\right) v_{-,1} - \left(S_1 - D E_1\right) v_{+,1} \right) \\[4pt]
-\tfrac{1}{2} D\Delta t \left(\tfrac{1}{2} v_{+,1}^2 + \tau_i \ln(1+ \tfrac{S_1 - D E_1}{2} ) - \tfrac{\rho}{2} \left( S_1 - D E_1 \right) \right) \\[4pt]
-\tfrac{1}{2m_e} D\Delta t \left(\tfrac{m_e}{2} v_{-,1}^2 + \ln(1+ \tfrac{S_1 + D E_1}{2}) - \tfrac{1}{2}\left(S_1 + D E_1 \right) \right) \\[4pt]
\end{bmatrix} \label{eq:matrix_code}
\end{align}
This system of matrices is what we will use to solve \eqref{eq:EP_SE_sys} and simulate \eqref{eq::5hotrescale}. Based on our trials, we find the best choice for \(\rho\) to be 
\begin{equation*}
    \rho =
    \begin{cases}
        1 & \tau_i = 1, \\
        1/90 & \tau_i = 0.
    \end{cases}
\end{equation*}

\subsection{Too Fast Too Fourier}

Based on the work of Sattinger and Li, we write an algorithm in \texttt{python} to solve our system iteratively \cite{li1999soliton}. To perform the algebraic manipulation, and handle the derivatives, we utilize Fast Fourier Transformations. Moreover, we will employ an averaging technique to manage the nonlinear terms. The entirety of our computation was coded from scratch and only used the basic math packages \texttt{math} and \texttt{numpy} \cite{van1995python, harris2020array}. The process follows.

Our equation of matrices \eqref{eq:matrix_code} is essentially \(A \mathbf{w}_2 = B \mathbf{w}_1 + \mathbf{q}_2 + \mathbf{q}_1\). Setting \(D=i\omega\) (\(\omega\) is the angular frequency), we find the matrix \(A\) to be invertible, so that 
\begin{equation}\label{eq::future_time_calc}
    \mathbf{w}_2 = A^{-1}B \mathbf{w}_1 + A^{-1}\mathbf{q}_2 + A^{-1}\mathbf{q}_1.
\end{equation}
Thus, we can find the future time step \(\mathbf{w}_2\) by performing some arithmetic on the present time step \(\mathbf{w}_1\). However, the nonlinear future step \(\mathbf{q}_2\) throws a wrench in our calculations. Adapting, we will average our nonlinear terms \(\mathbf{q}_1\) and \(\mathbf{q}_2\), as stated previously. We begin with \(\mathbf{q}_2 = \mathbf{q}_1\), then obtain a better estimation for \(\mathbf{q}_2\) by going through a few smaller sub-calculations following the \eqref{eq::future_time_calc} formula. A pseudo-code demonstrating this method reads
\begin{lstlisting}[language=Python]
# Commands
fft = Fast Fourier Transform
ifft = Inverse Fast Fourier Transform
# Set up empty matrix
w = ZeroMatrix(Spatial Grid Size, Temporal Grid Size)
# Insert initial condition
w[:,0] = w0
# Solve future time steps
for t in range(1, Temporal Grid Size):
    fftq1 = fft(q1 calc with w[:,t-1])
    wt1 = (A^{-1} x B)*fft(w[:,t-1]) + A^{-1}*fftq1
    fftq2 = fftq1
    for step in range(1,5):
        wt2 = wt1 + A^{-1}*fftq2
        w2 = ifft(wt2)
        fftq2 = fft(q2 calc with w2)
    fftw = wt1 + A^{-1}*fftq2
    w[:,t] = ifft(fftw)
\end{lstlisting}

With this method, we need only plug in an initial condition to find solutions \((S, \, E, \, v_+, \, v_-)\) for \eqref{eq:EP_SE_sys} at each time-step. Once we have these, we can obtain answers for the other variables \(n_+\), \(n_-\), and \(\phi\). The former two are acquired using \eqref{sim_dens_ES_def}, which notably requires \(\partial_x E\). The spatial derivative of \(E\) at each time step is calculated using finite differences, 
\begin{equation*}
    \partial_x E(x,t) = \frac{E(x, t) - E(x-\Delta x,t)}{\Delta x}.
\end{equation*}
%
The only nuanced piece to use this formula is enforcing
\begin{equation*}
    E(L + \Delta x, t) = E(-L,t).
\end{equation*}
Furthermore, because \(\partial_x \phi = E\), we can recover \(\phi(x, t)\) by performing the calculation,
\begin{equation*}
    \texttt{fft}(E) = i\omega \texttt{fft}(\phi) \quad\implies\quad \phi = \texttt{ifft}\left(\frac{\texttt{fft}(E)}{i\omega}\right).
\end{equation*}

\subsection{A Traveling Wave Initial Condition}

We will now put our algorithm to work and look at the evolution of \eqref{eq::5hotrescale} over time and space. For the initial condition, we first use a traveling wave with speed \(\mu\) that satisfies Theorem~\ref{thm:PE_homoclinic}. Note that because this is an initial condition, i.e. \(t=0\), \(\xi\) and \(x\) are the same. Define 
\begin{align*}
    \phi(x,0) & = \tilde{\phi}(\xi) , \\
    v_\pm(x,0) & = \tilde{v}_\pm(\xi) = k_\pm (\tilde{\phi}(\xi)), \\
    n_\pm(x,0) & = \tilde{n}_\pm(\xi) = \frac{\tilde{v}_\pm(\xi)}{\mu - \tilde{v}_\pm(\xi)}.
\end{align*}
Then from \eqref{sim_E_S_def}, 
\begin{align*}
    S(x,0) & = n_-(x,0) + n_+(x,0), \\
    E(x,0) & = \partial_x \phi(x,0) = \frac{\phi(x, 0) - \phi(x-\Delta x,0)}{\Delta x}.
\end{align*}

After setting up all of our initial conditions in \texttt{Wolfram Mathematica}, we can import them into \texttt{python} as an array. We need to extend the data, furthermore, because it must fit our spatial domain, \(x\in[-L, L]\), if \(L > \texttt{Max}(\xi)\). To ensure the initial condition has the same length as the spatial domain, we simply pad the array with zeros on either side since the traveling wave should go to zero asymptotically. However, due to small computational discrepancies, the original padding produced jumps in the tails of the data as seen in Figure~\ref{fig:icVec}. These tiny cracks led to inaccurate behavior in our simulations. So, we remedied this issue by convolving the initial conditions with the mollifier
\begin{equation*}
m(x) =
    \begin{cases}
        \kappa e^{-1/(1-|x|^2)} & |x| < 1, \\
        0 & else,
    \end{cases}
\end{equation*}
containing a specific scaling \(\kappa\) to ensure normalization. After convolving, we obtained the desired smoother initial conditions to use in our simulations. Our mollified example is shown in Figure~\ref{fig:icMolli}.
\begin{figure}[h!]
     \centering
     \begin{subfigure}[b]{0.45\textwidth}
         \centering
         \includegraphics[width=\textwidth, trim={55pt 35pt 35pt 275pt}, clip]{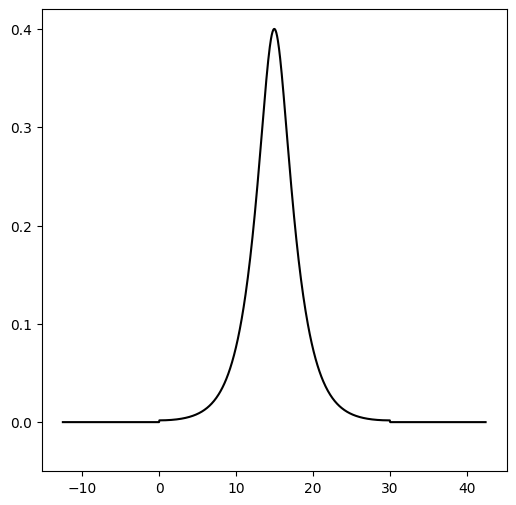} 
         \caption{Tails of initial import}
         \label{fig:icVec}
     \end{subfigure}
     \hfill
     \begin{subfigure}[b]{0.45\textwidth}
         \centering
         \includegraphics[width=\textwidth, trim={55pt 35pt 35pt 275pt}, clip]{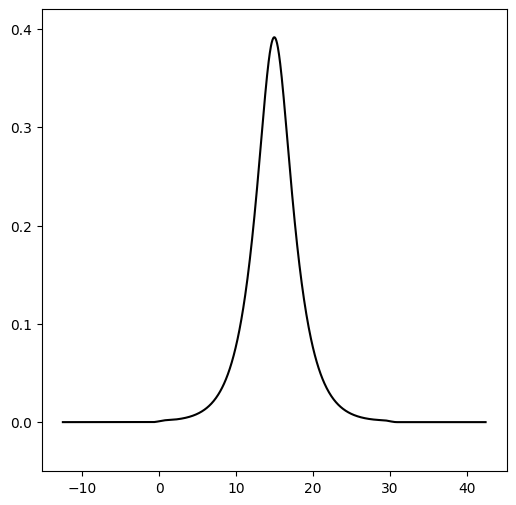}
         \caption{Tails post-mollification}
         \label{fig:icMolli}
     \end{subfigure}
     \caption{Mollifying the initial condition \(S(x,0)\).}
        \label{fig:mollifier}
\end{figure}

Now that we have established the methodologies to simulate our system, we can dive into our findings. We will observe how solutions to \eqref{eq::5hotrescale} evolve across space and time, gaining valuable insights into the intricate interplay between the particles and the resulting fluid dynamics. We provide results for each particle group's velocity and density, along with the electric potential, in both hot and cold plasmas. The initial condition for a hot plasma has a wave speed of \(\mu=1.5\), while for cold plasma, \(\mu = 1.154\). We take the domain size \(L=250\), spatial step-size \(\Delta x = 0.05\), temporal step-size \(\Delta t = 0.5\), and max time \(T=100\). 

\subsubsection{Particle Velocities}

We start with the particle velocities as they display the most fascinating phenomena. The velocity will tell us how individual particles move within our medium. Recall their governing equations are directly impacted by our key parameters, \(\tau_i\) and \(m_e\). We will use their varying values to observe how they affect the Euler-Poisson system as a whole. 

For a hot plasma, \(\tau_i=1\), the evolution of the ion velocity is shown in Figure~\ref{fig:waterfall_hot_v+}, and electron velocity in Figure~\ref{fig:waterfall_hot_v-}. Both particles exhibit the initial velocity wave propagating forward in space. As time progresses, this wave leaves a trail of dispersive waves in its wake. The children follow their mother, growing and expanding in her footsteps, a symbolic echo of the physical behavior observed. Unlike the ion velocity, however, electron velocity exhibits an extreme warbling centered at the same spatial interval as the initial wave. This discrepancy is likely due to the singular perturbation by \(m_e\) in \eqref{eq::5hotrescale}.

The warbling contains two major entities that resemble a mountain with a valley connected to a canyon with a hill. With each time step, their positions flip, creating a back-and-forth motion. As this warbling continues, the region connecting the mountain and canyon expands. Initially flat, this expansion steepens, pushing outward and reducing the disparity between the depth and altitudes. It erodes the valleys into canyons, and deposits sediment, transforming the hill into a mountain. This process sends portions outward to form the adjacent hill and valley. The middle part flattens again before widening once more, and the cycle repeats. Slowly, the overall expanse of the warbling widens. The cardiologist's worst nightmare is shown in Figure~\ref{fig:velocity_oscilations}.
\begin{figure}[h]
    \centering
    \includegraphics[width=\textwidth]{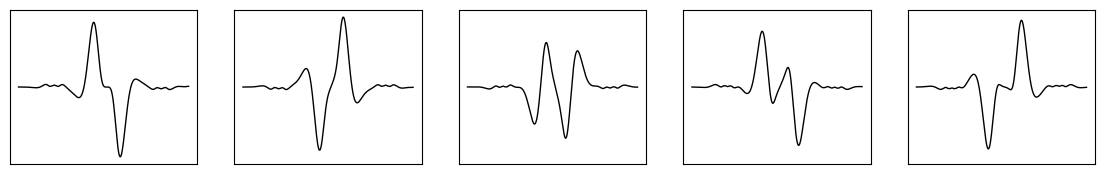}
    \caption{Oscillations in the electron velocity of a hot plasma at times \(t=31.5, \, 34, \, 35.5, \, 37, \, 39.5\).}
    \label{fig:velocity_oscilations}
\end{figure}

This phenomenon is not unique to hot two-fluid plasmas. Indeed, the cold electron velocity in Figure~\ref{fig:waterfall_cold_v-} presents it as well. Moreover, all variables in both thermal regimes exhibit the emergence of wave packets that propagate backward in space, contrasting with the more common direction. It may be easier to observe in the zoomed-in portion of the ion density diagrams, Figures~\ref{fig:waterfall_hot_n+} and \ref{fig:waterfall_cold_n+}. These wavelets in hot plasmas start small and appear to, fairly consistently, grow in size as we follow them backward. However, cold plasma wavelets start strong, shrink, and grow again. This behavior is likely due to destructive wave interference in this region at thermodynamic equilibrium. In other respects, the particularly notable differences between hot and cold particle velocities lie within the overall wave amplitude and magnitude of subsequent behavior. 

%
%
\subsubsection{Particle Densities}

Figures~\ref{fig:waterfall_hot_n+} and \ref{fig:waterfall_hot_n-} illustrate the progression of ion and electron density within a hot plasma, while Figures~\ref{fig:waterfall_cold_n+} and \ref{fig:waterfall_cold_n-} depict the equivalent densities in a cold plasma. These figures reveal a non-uniform distribution of particles, with regions of varying density. Similar to the behavior of particle velocities, we observe the initial density wave moving forward in space, accompanied by smaller trailing waves in both thermal cases. The electron densities, moreover, exhibit remnants of the warbling seen in their velocity, including the widening effect but maintenance of a small amplitude. 

\subsubsection{Electric Potential}

The electric potential highlights the nature of ion and electron movements within the system, with each particle's motion influencing its neighbors and contributing to the overall wave dynamics. This variable's evolution is depicted in Figures~\ref{fig:waterfall_hot_phi} and \ref{fig:waterfall_cold_phi} for hot and cold plasmas, respectively. As we know, the electric potential is governed by the particle densities in the plasma. Hence, we can see their spatial and temporal characteristics combine to form the electric potential waves. The collective behavior includes an initial propagating wave, trailing waves, and symmetric oscillations about the initial wave region.

\begin{figure}
     \centering
     \begin{subfigure}[b]{0.49\textwidth}
         \centering
         \includegraphics[width=\textwidth, trim={0 25pt 0 50pt}, clip]{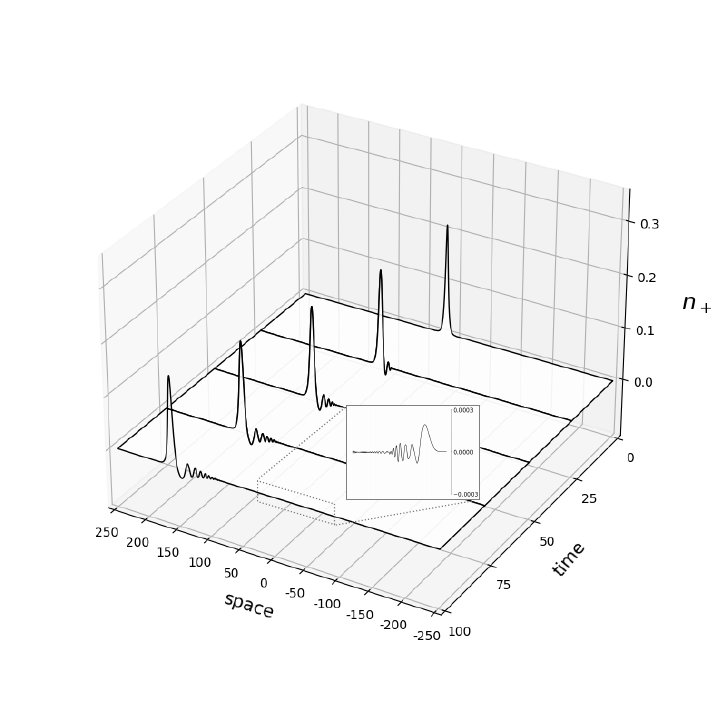}
         \caption{Ion Density, \(n_+(x,t)\) \\[14pt]}
         \label{fig:waterfall_hot_n+}
     \end{subfigure}
     \hfill
     \begin{subfigure}[b]{0.49\textwidth}
         \centering
         \includegraphics[width=\textwidth, trim={0 40pt 0 70pt}, clip]{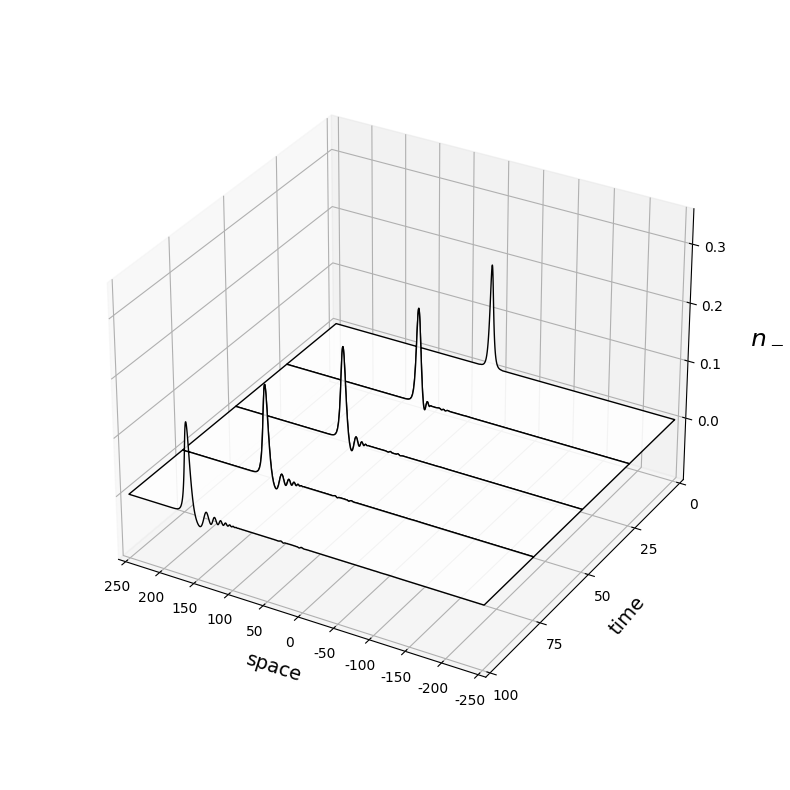}
         \caption{Electron Density, \(n_-(x,t)\) \\[14pt]}
         \label{fig:waterfall_hot_n-}
     \end{subfigure}
     \hfill
     \begin{subfigure}[b]{0.49\textwidth}
         \centering
         \includegraphics[width=\textwidth, trim={0 40pt 0 70pt}, clip]{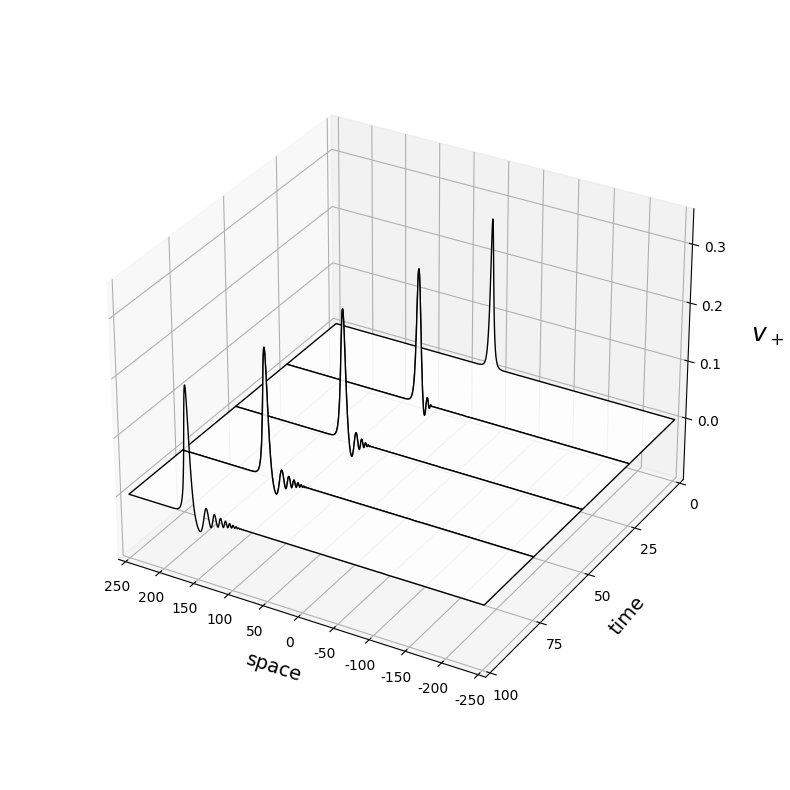}
         \caption{Ion Velocity, \(v_+(x,t)\) \\[14pt]}
         \label{fig:waterfall_hot_v+}
     \end{subfigure}
     \hfill
     \begin{subfigure}[b]{0.49\textwidth}
         \centering
         \includegraphics[width=\textwidth, trim={0 40pt 0 70pt}, clip]{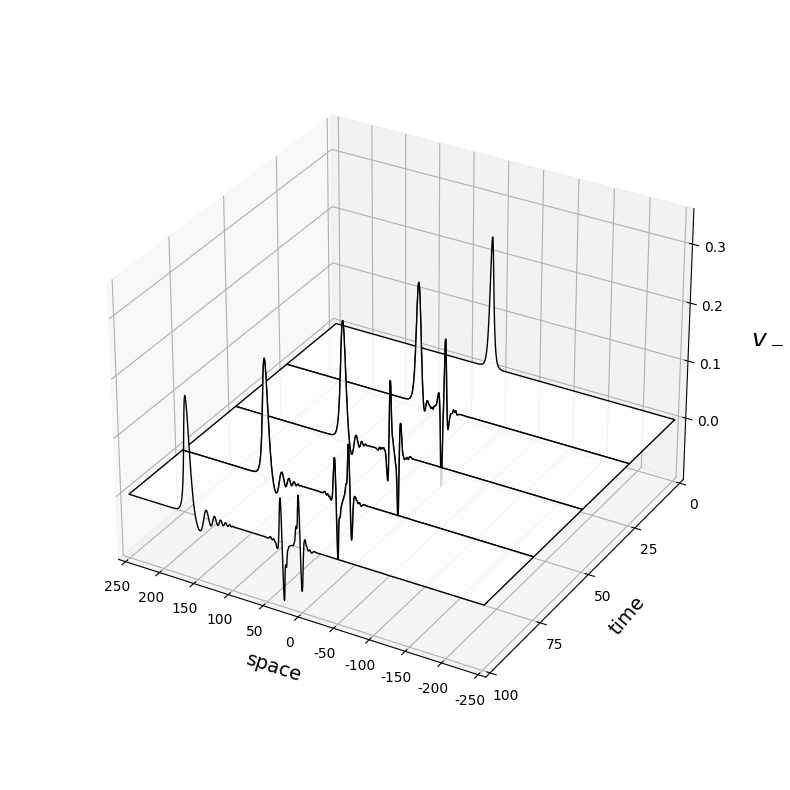}
         \caption{Electron Velocity, \(v_-(x,t)\) \\[14pt]}
         \label{fig:waterfall_hot_v-}
     \end{subfigure}
     \hfill
     \begin{subfigure}[b]{0.49\textwidth}
         \centering
         \includegraphics[width=\textwidth, trim={0 40pt 0 70pt}, clip]{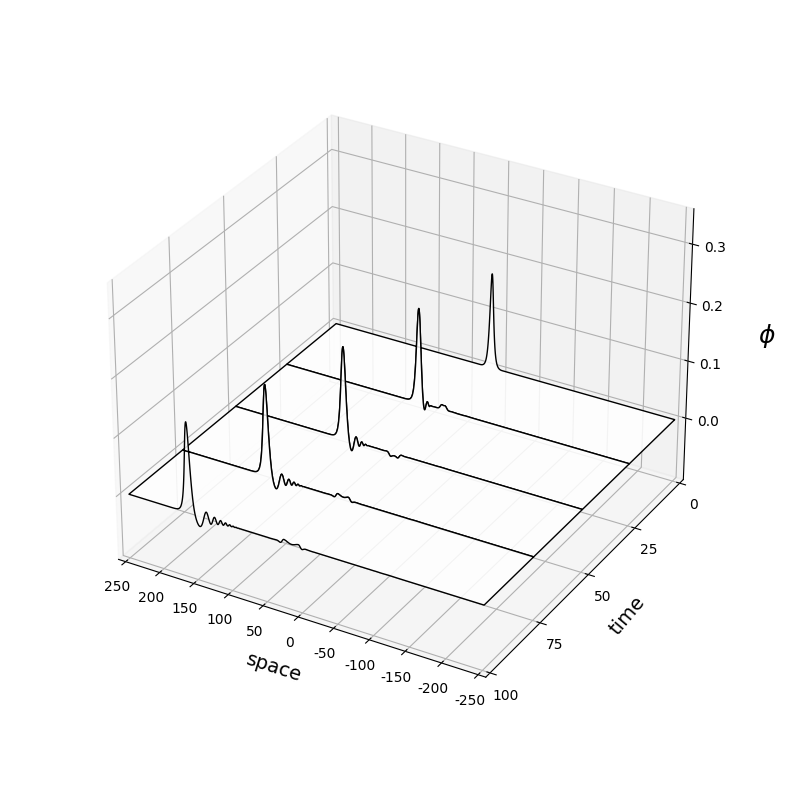}
         \caption{Electric Potential, \(\phi(x,t)\)}
         \label{fig:waterfall_hot_phi}
     \end{subfigure}
     \caption{Propagation of a hot two-fluid plasma (\(\tau_i=1\)) over space and time. The initial condition is a traveling wave solution of speed \(\mu = 1.5\).}
    \label{fig:waterfall_hot}
\end{figure}

\begin{figure}
     \centering
     \begin{subfigure}[b]{0.49\textwidth}
         \centering
         \includegraphics[width=\textwidth, trim={0 25pt 0 50pt}, clip]{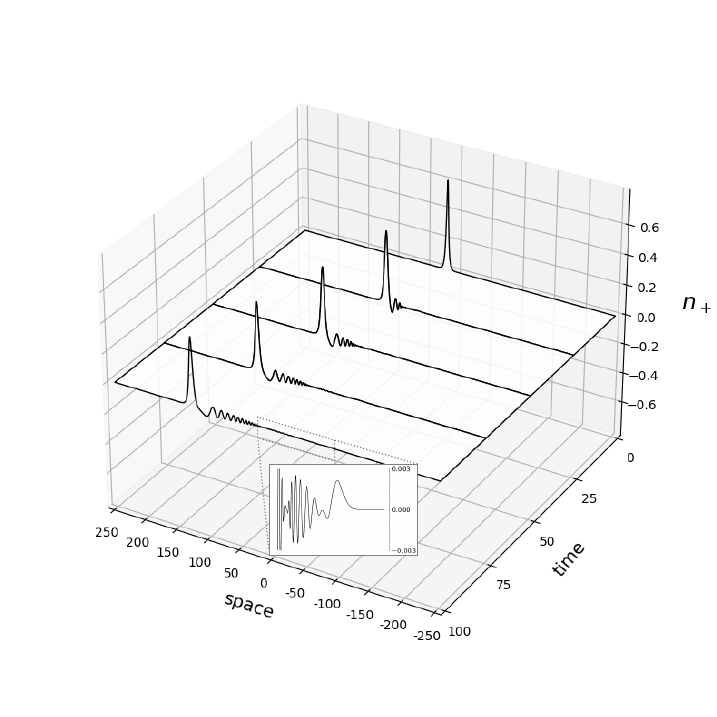}
         \caption{Ion Density, \(n_+(x,t)\) \\[14pt]}
         \label{fig:waterfall_cold_n+}
     \end{subfigure}
     \hfill
     \begin{subfigure}[b]{0.49\textwidth}
         \centering
         \includegraphics[width=\textwidth, trim={0 40pt 0 70pt}, clip]{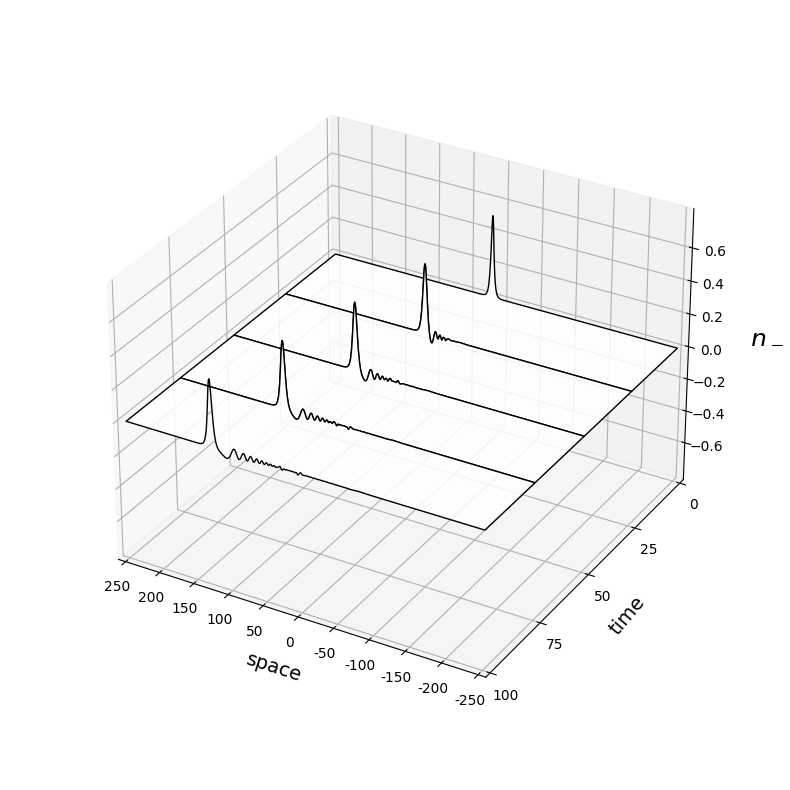}
         \caption{Electron Density, \(n_-(x,t)\) \\[14pt]}
         \label{fig:waterfall_cold_n-}
     \end{subfigure}
     \hfill
     \begin{subfigure}[b]{0.49\textwidth}
         \centering
         \includegraphics[width=\textwidth, trim={0 40pt 0 70pt}, clip]{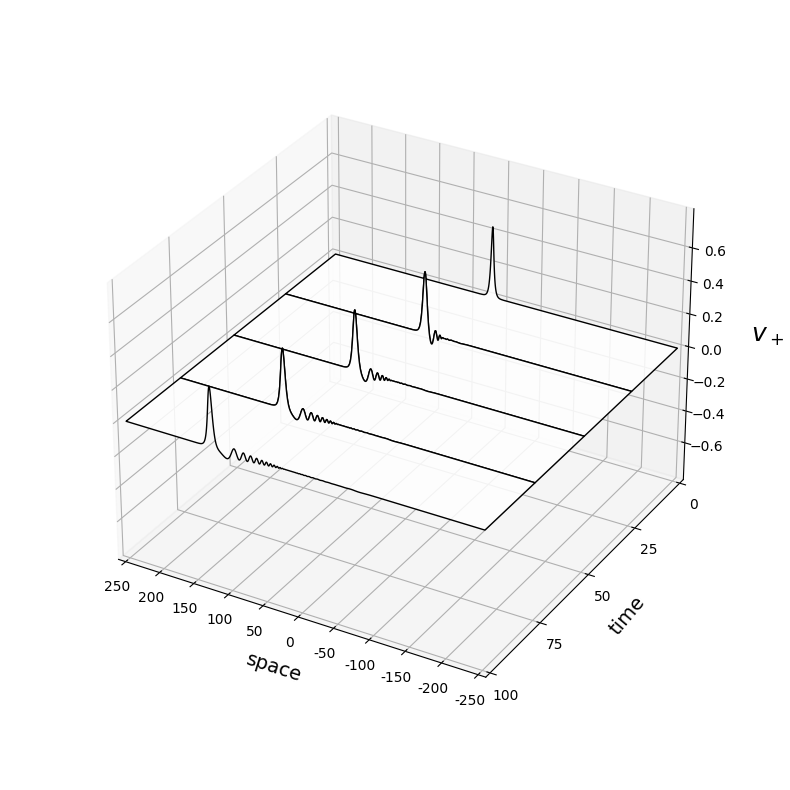}
         \caption{Ion Velocity, \(v_+(x,t)\) \\[14pt]}
         \label{fig:waterfall_cold_v+}
     \end{subfigure}
     \hfill
     \begin{subfigure}[b]{0.49\textwidth}
         \centering
         \includegraphics[width=\textwidth, trim={0 40pt 0 70pt}, clip]{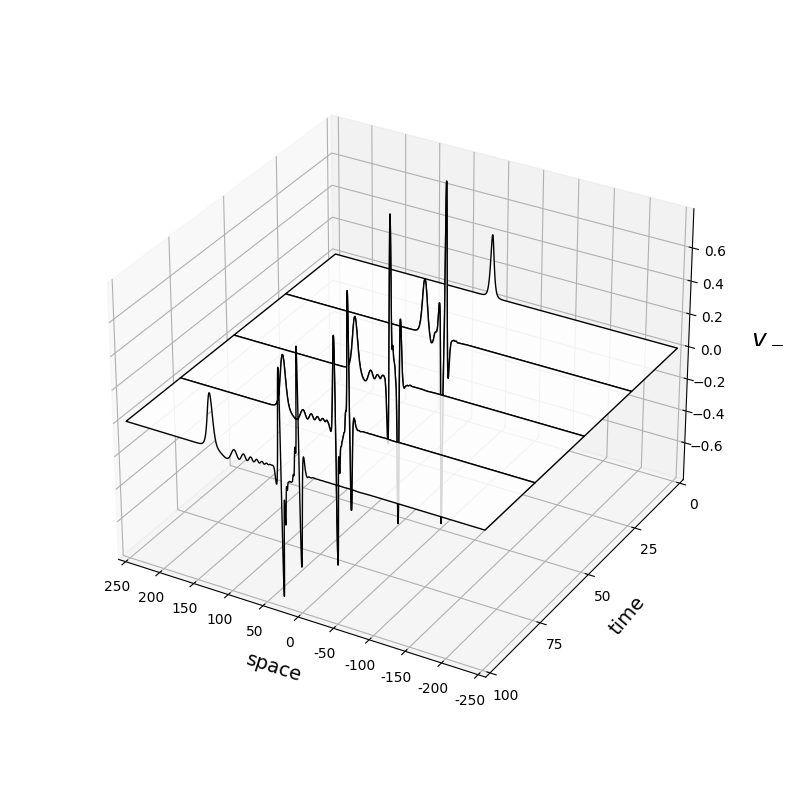}
         \caption{Electron Velocity, \(v_-(x,t)\) \\[14pt]}
         \label{fig:waterfall_cold_v-}
     \end{subfigure}
     \hfill
     \begin{subfigure}[b]{0.49\textwidth}
         \centering
         \includegraphics[width=\textwidth, trim={0 40pt 0 70pt}, clip]{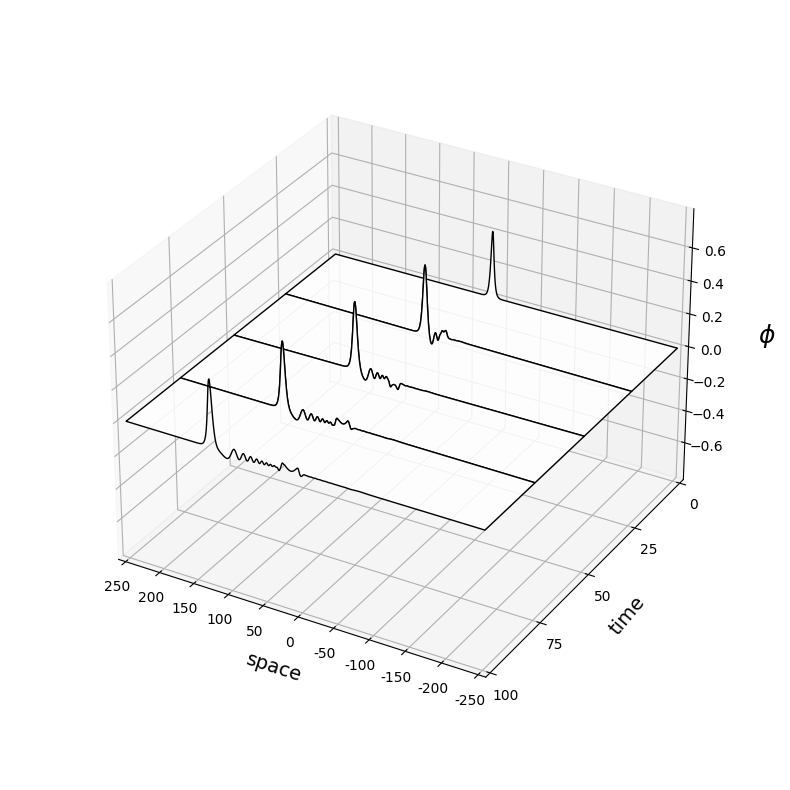}
         \caption{Electric Potential, \(\phi(x,t)\)}
         \label{fig:waterfall_cold_phi}
     \end{subfigure}
     \caption{Propagation of a cold two-fluid plasma (\(\tau_i=0\)) over space and time. The initial condition is a traveling wave solution of speed \(\mu~=~1.154\).}
    \label{fig:waterfall_cold}
\end{figure}

\newpage
\subsection{A KdV Initial-Condition}

The next initial condition we consider comes from the solution to the classic Korteweg de-Vries equation, for which our variables have a first-order approximation \cite{kelting}. The solution, extended by Kelting and Wright, is 
\begin{equation}\label{eq::kdvSol}
    n_+(x,t) = \varepsilon n_+^{(1)}(\varepsilon^{1/2}x,t) = \frac{3\tilde\mu\varepsilon}{c} \text{sech}^2\left((x - (c + \tilde\mu \varepsilon)t)\sqrt{\frac{c \tilde\mu\varepsilon \left(m_e+1\right)}{2\left(c^2 m_e-1\right)^2}}\right)
\end{equation}
where \(\tilde\mu\) is the speed of the wave, \(\varepsilon\) is the amplitude of the initial disturbance, and \(c\) is the speed of sound with
\(
    c=\sqrt{\tfrac{1+\tau_i}{1+m_e}}.
\)
Then from the first-order relation \((\mathcal{S}_1)\) in \cite{kelting}, we get the initial conditions for our other variables,
\begin{align*}
    n_+(x,0) & = \frac{3\tilde\mu\varepsilon}{c} \text{sech}^2\left(x\sqrt{\tfrac{c \tilde\mu\varepsilon \left(m_e+1\right)}{2\left(c^2 m_e-1\right)^2}}\right) \\
    n_-(x,0) & = n_+(x,0), \\
    v_\pm (x,0) & = c n_+(x,0), \\
    \phi(x,0) & = (1-\tau_i m_e)n_+(x,0).
\end{align*}
Thus, \(S(x,0) = 2n_+(x,0)\) and \(E(x,0) = \tfrac{d}{dx}\phi(x,0)\). Furthermore, we keep parameters \(L\), \(\Delta x\), \(\Delta t\), and \(T\) to be the same as the previous simulation.

\subsubsection{The Results}

With the initial condition solving the inhomogeneous KdV equation, we take \(\tilde\mu = 1.5\) for hot plasmas, \(\tilde\mu=1.154\) for cold, and \(\varepsilon = 10^{-2}\). The results for a hot plasma are given in Figure~\ref{fig:waterfall_hot_KdV}, and cold plasma in Figure~\ref{fig:waterfall_cold_KdV}. Compared to the simulations with a traveling wave initial condition, these waves propagate significantly more smoothly. There are no trailing waves except a small offshoot that counterpropagates. And, even though the warbling is still present in the electron velocity, it is appreciably smaller.

From our long-wavelength study, we proved that the two-fluid Euler-Poisson system should converge to solutions of the KdV equations. Moreover, the error-scaled residual will be bounded by a small order of \(\varepsilon\). In Figure~\ref{fig:waterfall_KdV_compare} we compare the evolution of the Korteweg de-Vries solution from \eqref{eq::kdvSol} (blue solid line) to the evolution of the ion density in both hot and cold plasmas (black dotted line). The results validate the theory of KdV being the Euler-Poisson long-wavelength limit; the waves lie on top of one another with minuscule deviations.

\begin{figure}
     \centering
     \begin{subfigure}[b]{0.49\textwidth}
         \centering
         \includegraphics[width=\textwidth, trim={0 40pt 0 70pt}, clip]{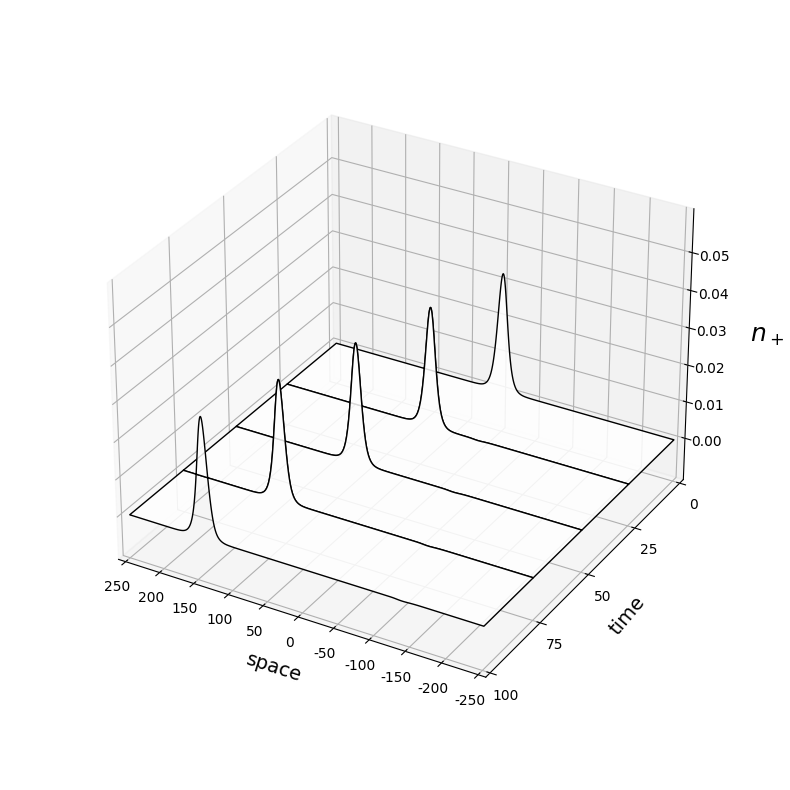}
         \caption{Ion Density, \(n_+(x,t)\) \\[14pt]}
         \label{fig:waterfall_hot_n+_KdV}
     \end{subfigure}
     \hfill
     \begin{subfigure}[b]{0.49\textwidth}
         \centering
         \includegraphics[width=\textwidth, trim={0 40pt 0 70pt}, clip]{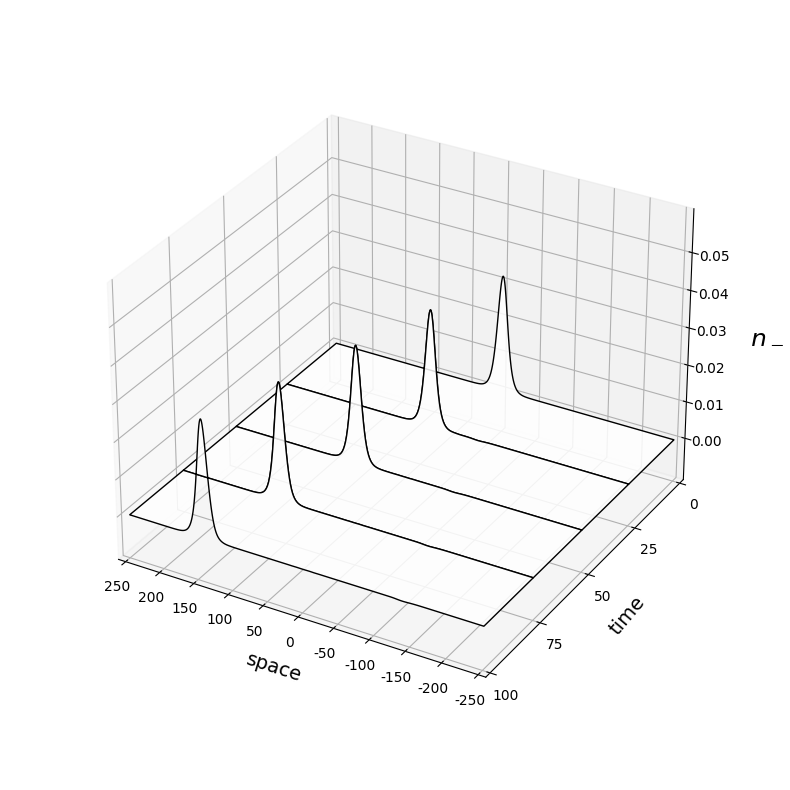}
         \caption{Electron Density, \(n_-(x,t)\) \\[14pt]}
         \label{fig:waterfall_hot_n-_KdV}
     \end{subfigure}
     \hfill
     \begin{subfigure}[b]{0.49\textwidth}
         \centering
         \includegraphics[width=\textwidth, trim={0 40pt 0 70pt}, clip]{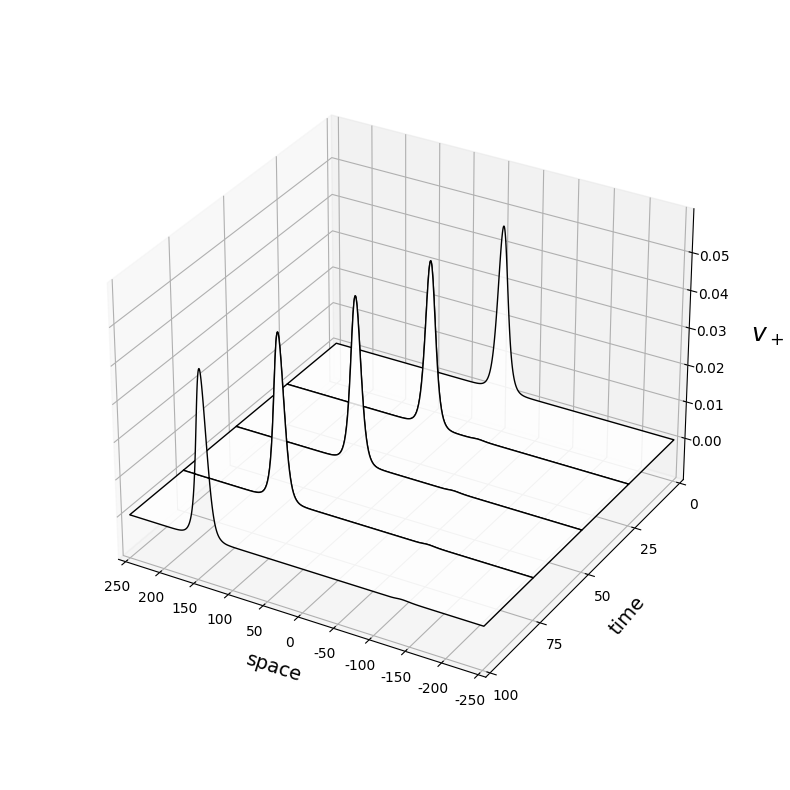}
         \caption{Ion Velocity, \(v_+(x,t)\) \\[14pt]}
         \label{fig:waterfall_hot_v+_KdV}
     \end{subfigure}
     \hfill
     \begin{subfigure}[b]{0.49\textwidth}
         \centering
         \includegraphics[width=\textwidth, trim={0 40pt 0 70pt}, clip]{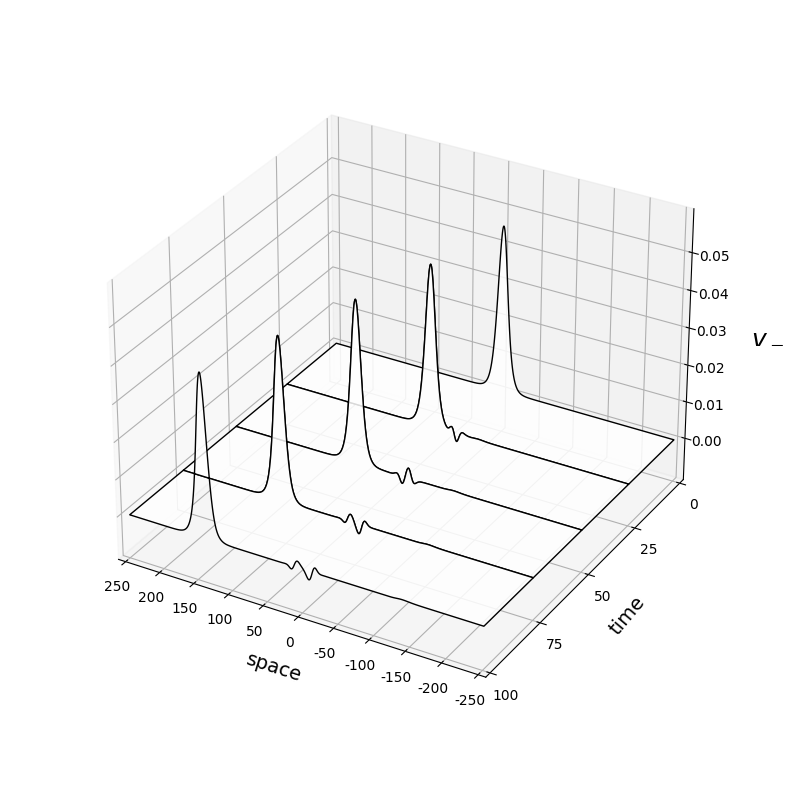}
         \caption{Electron Velocity, \(v_-(x,t)\) \\[14pt]}
         \label{fig:waterfall_hot_v-_KdV}
     \end{subfigure}
     \hfill
     \begin{subfigure}[b]{0.49\textwidth}
         \centering
         \includegraphics[width=\textwidth, trim={0 40pt 0 70pt}, clip]{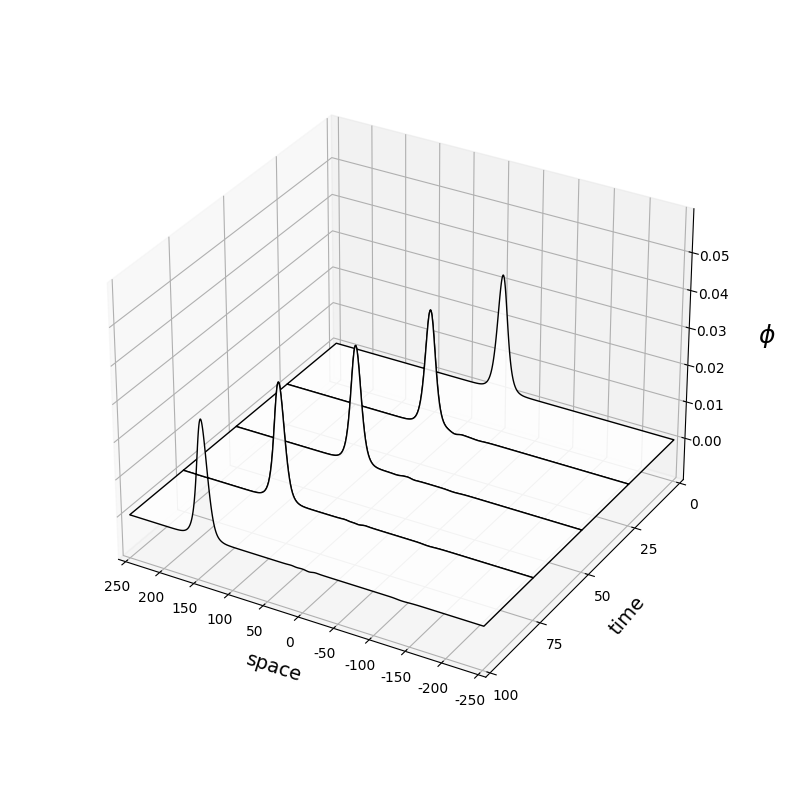}
         \caption{Electric Potential, \(\phi(x,t)\)}
         \label{fig:waterfall_hot_phi_KdV}
     \end{subfigure}
     \caption{Propagation of a hot two-fluid plasma (\(\tau_i=1\)) over space and time. The initial condition is a Korteweg de-Vries wave solution of speed \(\tilde\mu = 1.5\) and \(\varepsilon=10^{-2}\).}
    \label{fig:waterfall_hot_KdV}
\end{figure}

\begin{figure}
     \centering
     \begin{subfigure}[b]{0.49\textwidth}
         \centering
         \includegraphics[width=\textwidth, trim={0 40pt 0 70pt}, clip]{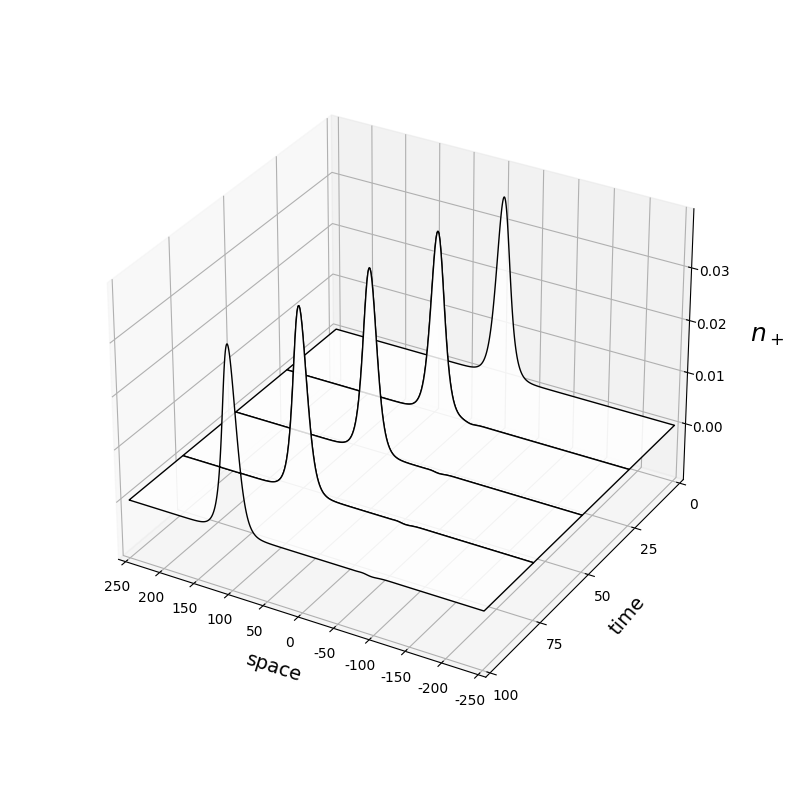}
         \caption{Ion Density, \(n_+(x,t)\) \\[14pt]}
         \label{fig:waterfall_cold_n+_KdV}
     \end{subfigure}
     \hfill
     \begin{subfigure}[b]{0.49\textwidth}
         \centering
         \includegraphics[width=\textwidth, trim={0 40pt 0 70pt}, clip]{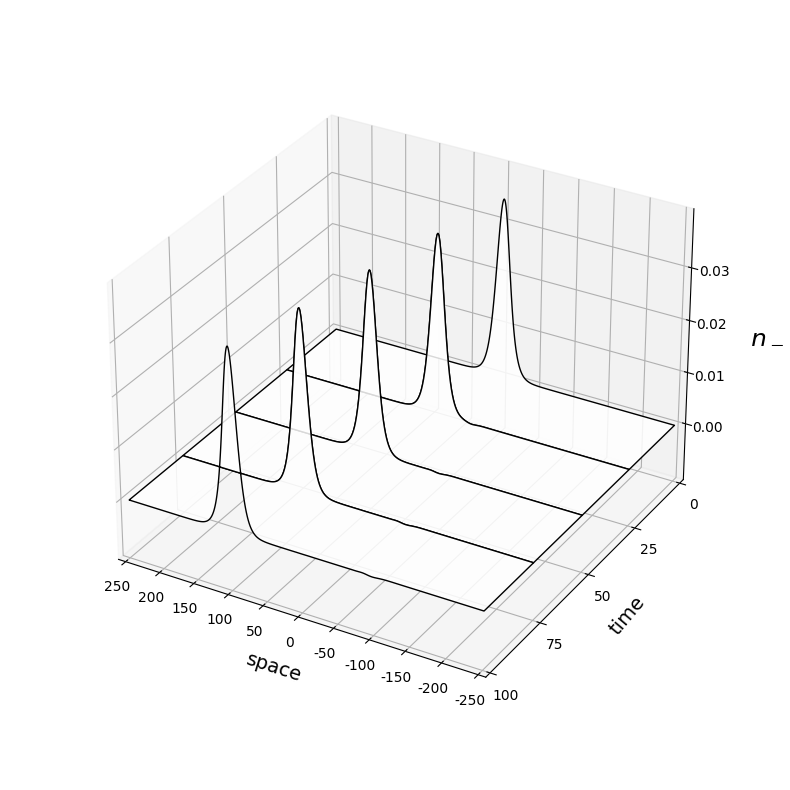}
         \caption{Electron Density, \(n_-(x,t)\) \\[14pt]}
         \label{fig:waterfall_cold_n-_KdV}
     \end{subfigure}
     \hfill
     \begin{subfigure}[b]{0.49\textwidth}
         \centering
         \includegraphics[width=\textwidth, trim={0 40pt 0 70pt}, clip]{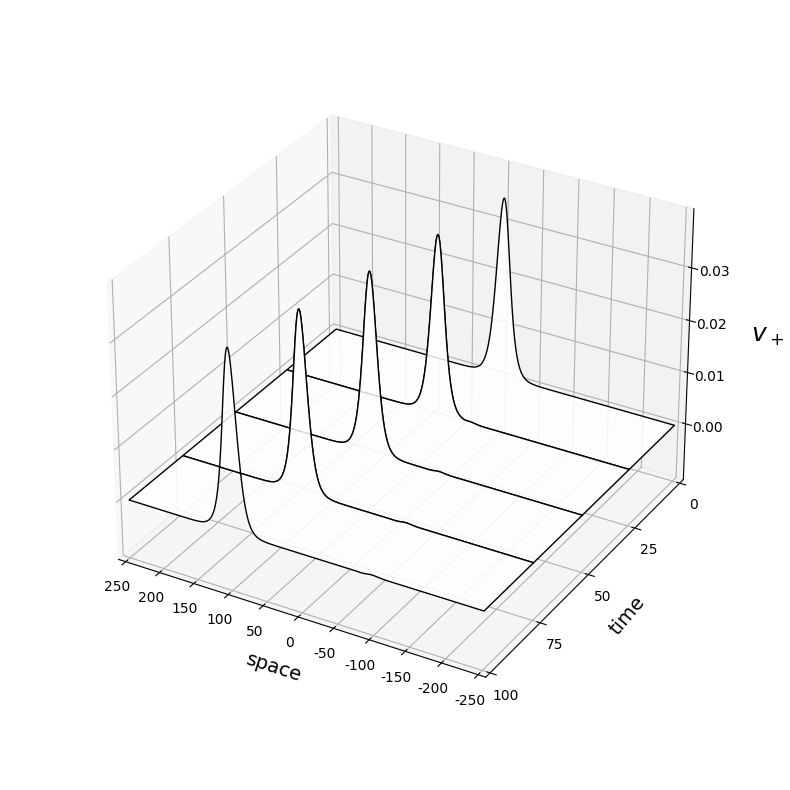}
         \caption{Ion Velocity, \(v_+(x,t)\) \\[14pt]}
         \label{fig:waterfall_cold_v+_KdV}
     \end{subfigure}
     \hfill
     \begin{subfigure}[b]{0.49\textwidth}
         \centering
         \includegraphics[width=\textwidth, trim={0 40pt 0 70pt}, clip]{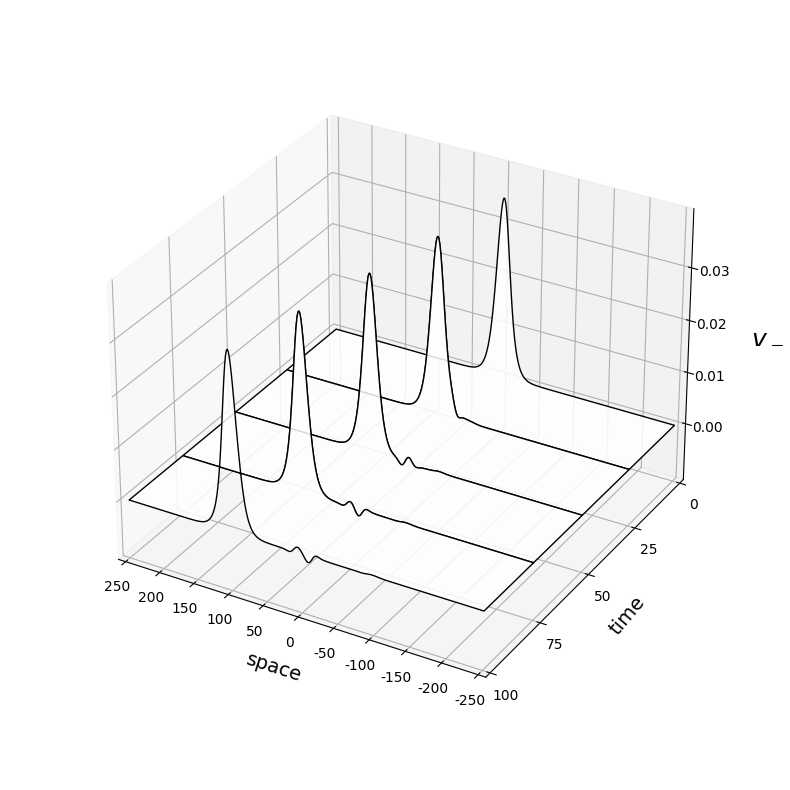}
         \caption{Electron Velocity, \(v_-(x,t)\) \\[14pt]}
         \label{fig:waterfall_cold_v-_KdV}
     \end{subfigure}
     \hfill
     \begin{subfigure}[b]{0.49\textwidth}
         \centering
         \includegraphics[width=\textwidth, trim={0 40pt 0 70pt}, clip]{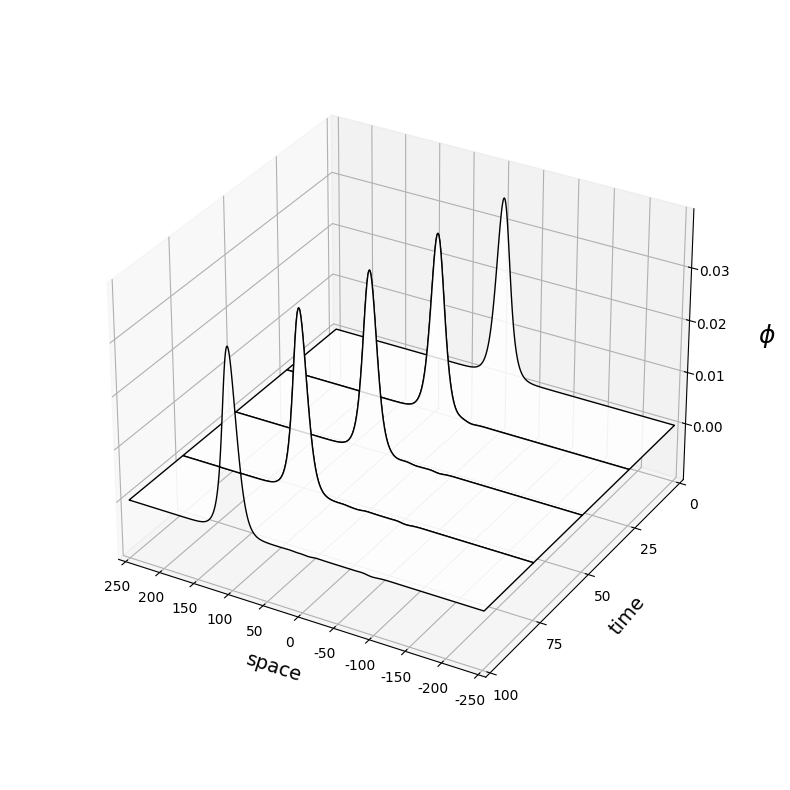}
         \caption{Electric Potential, \(\phi(x,t)\)}
         \label{fig:waterfall_cold_phi_KdV}
     \end{subfigure}
     \caption{Propagation of a cold two-fluid plasma (\(\tau_i=0\)) over space and time. The initial condition is a Korteweg de-Vries wave solution of speed \(\tilde\mu = 1.154\) and \(\varepsilon=10^{-2}\).}
    \label{fig:waterfall_cold_KdV}
\end{figure}

\begin{figure}[h]
\begin{subfigure}[b]{0.49\textwidth}
     \centering
     \includegraphics[width=\textwidth, trim={0 40pt 0 70pt}, clip]{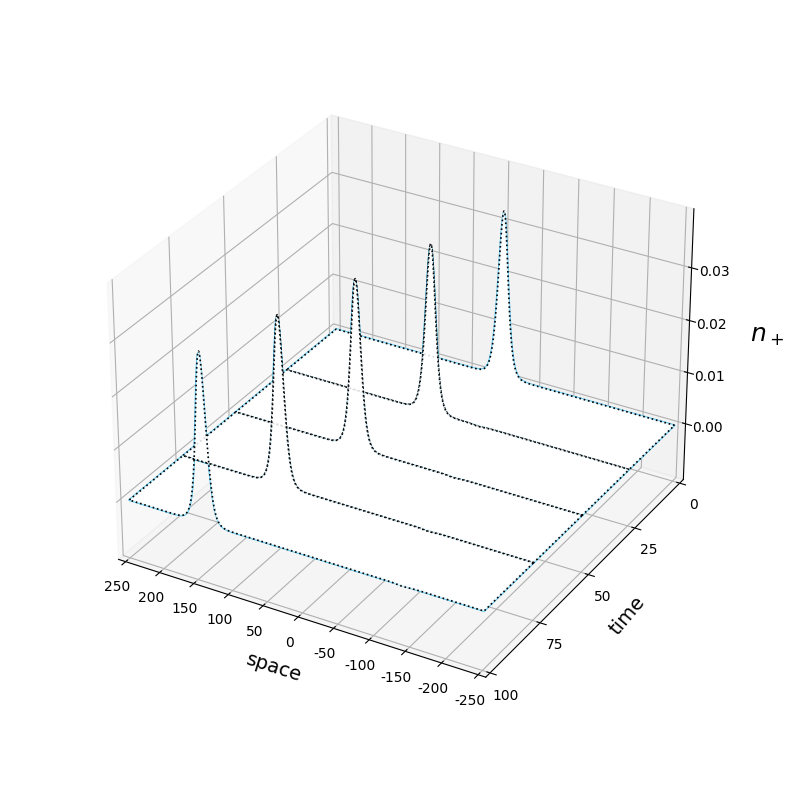}
     \caption{A hot two-fluid plasma (\(\tau_i=1\)).}
         \label{fig:waterfall_hot_compare}
     \end{subfigure}
\hfill
\begin{subfigure}[b]{0.49\textwidth}
     \centering
     \includegraphics[width=\textwidth, trim={0 40pt 0 70pt}, clip]{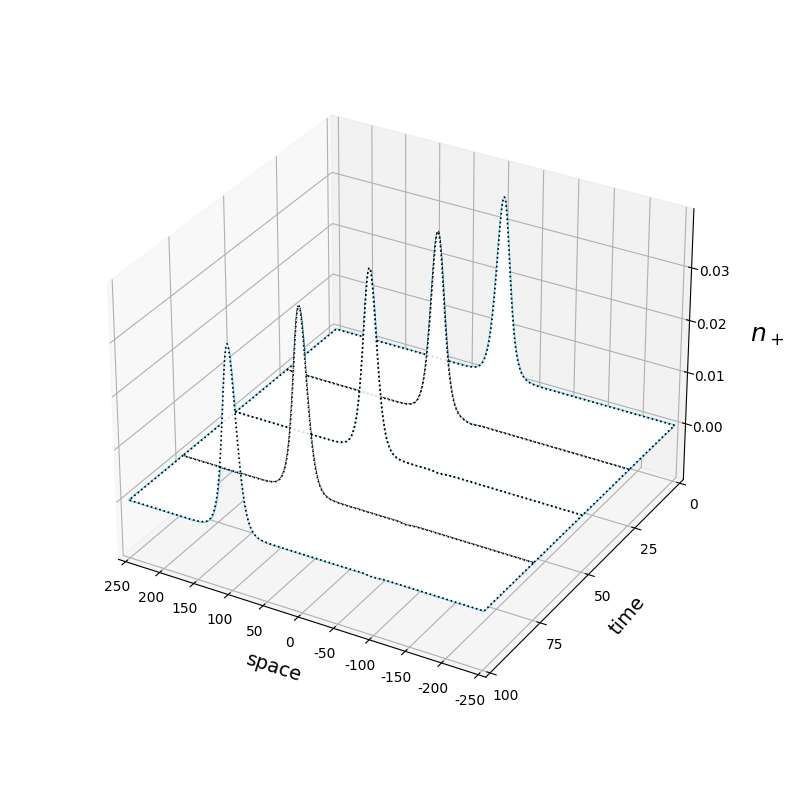}
     \caption{A cold two-fluid plasma (\(\tau_i=0\)).}
         \label{fig:waterfall_cold_compare}
     \end{subfigure}
 \caption{Comparing the evolution of the Korteweg de-Vries solution (blue, solid) to the evolution of \(n_+\) (black, dotted) with \(\varepsilon=10^{-2}\).}
\label{fig:waterfall_KdV_compare}
\end{figure}

\subsection{Numerical Artifacts}

To check if our simulations are consistent, we must observe how the behavior may or may not change with smaller spatial and time steps. We check the electron velocity under various conditions in Figure~\ref{fig:waterfall_other} as this variable exhibits the most extreme phenomena. The initial condition we use is a traveling wave of speed \(\mu = c + 10^{-3}\), a small order higher than sonic speed. This leads to an initial condition of shorter amplitude, and therefore, a tamer warbling in the center of the domain.

The base case with \(\Delta t =0.5\) and \(\Delta x=0.05\) is shown in Figure~\ref{fig:waterfall_smaller_mu}. It bears behavior similar to the other simulations of the same parameters with the higher speed and KdV initial conditions. We start with halving the spatial step, \(\Delta x=0.025\), to get Figure~\ref{fig:waterfall_half_space} and note there is no difference between its propagation and the original. Halving the time step next, \(\Delta t=0.25\), Figure~\ref{fig:waterfall_half_time} shows that while the main wave is a match, the warbling is wider spread. This is also true for halving both the spatial and time steps as shown in Figure~\ref{fig:waterfall_half_both}. This indicates that our simulations are consistent in space, but something fishy might occur in time.

\begin{figure}
     \centering
     \begin{subfigure}[b]{0.49\textwidth}
         \centering
         \includegraphics[width=\textwidth, trim={0 40pt 0 70pt}, clip]{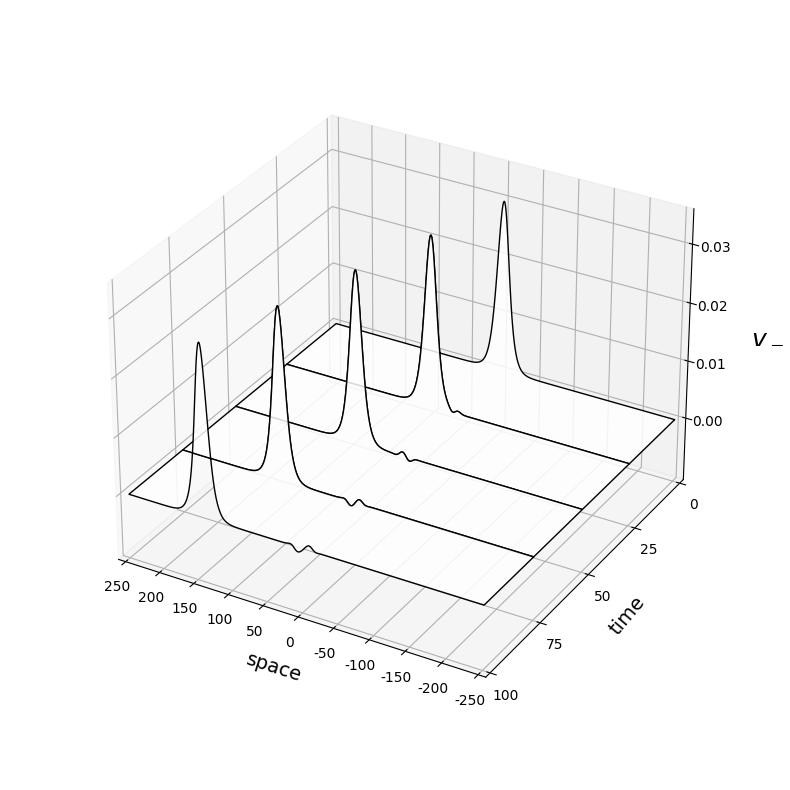}
         \caption{Original, \(\Delta t = 0.5, \, \Delta x = 0.05\) \\[14pt]}
         \label{fig:waterfall_smaller_mu}
     \end{subfigure}
     \hfill
     \begin{subfigure}[b]{0.49\textwidth}
         \centering
         \includegraphics[width=\textwidth, trim={0 40pt 0 70pt}, clip]{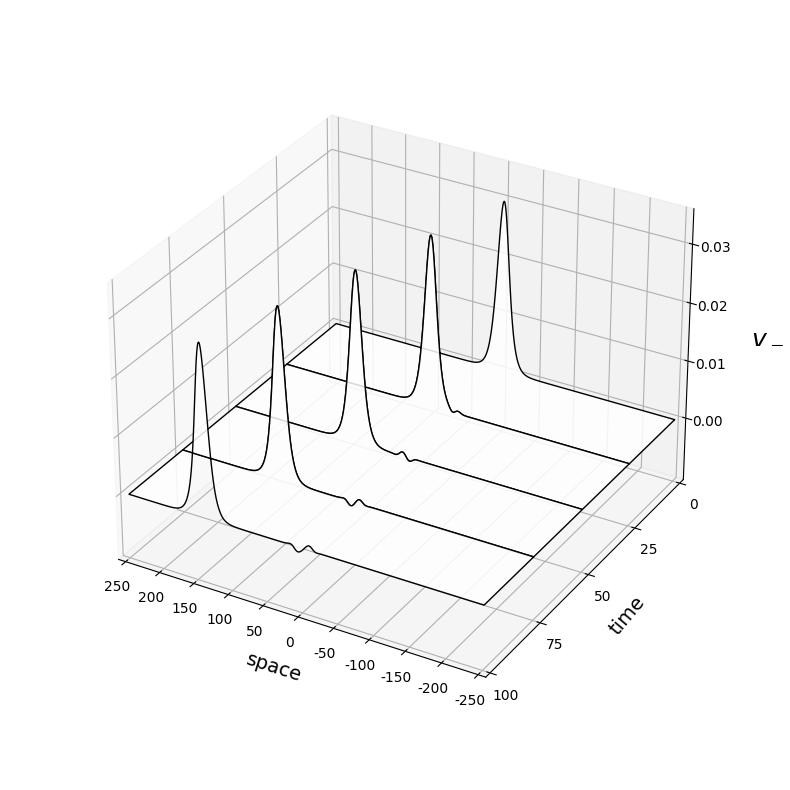}
         \caption{Half spatial step, \( \Delta t = 0.5, \, \Delta x = 0.025\) \\[14pt]}
         \label{fig:waterfall_half_space}
     \end{subfigure}
     \hfill
     \begin{subfigure}[b]{0.49\textwidth}
         \centering
         \includegraphics[width=\textwidth, trim={0 40pt 0 70pt}, clip]{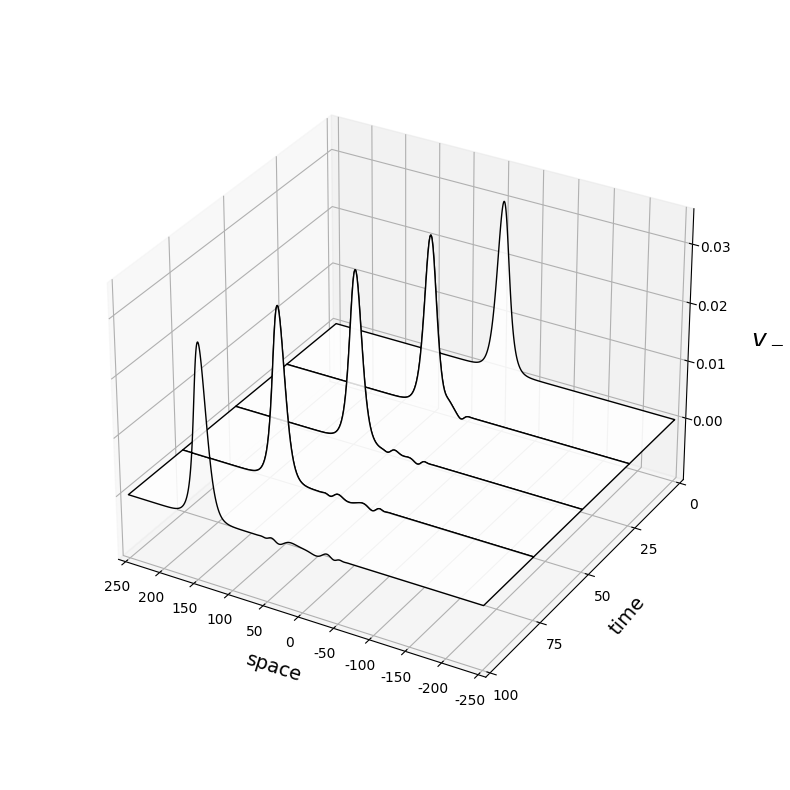}
         \caption{Half time step, \( \Delta t = 0.25, \, \Delta x = 0.05\) \\[14pt]}
         \label{fig:waterfall_half_time}
     \end{subfigure}
     \hfill
     \begin{subfigure}[b]{0.49\textwidth}
         \centering
         \includegraphics[width=\textwidth, trim={0 40pt 0 70pt}, clip]{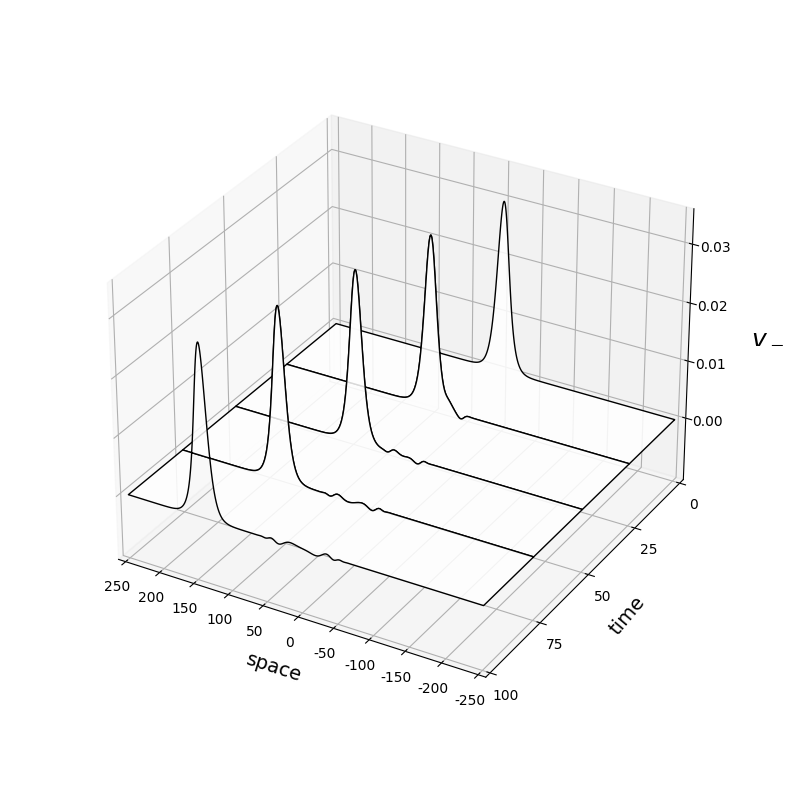}
         \caption{Half both steps, \(\Delta t = 0.25, \, \Delta x = 0.025\) \\[14pt]}
         \label{fig:waterfall_half_both}
     \end{subfigure}
     \caption{Propagation of a hot electron velocity (\(\tau_i=1\)) over space and time with an initial condition of a traveling wave of speed \(\mu = c + 10^{-3}\).}
    \label{fig:waterfall_other}
\end{figure}

Indeed, some of the wild oscillations may be a consequence of the numerical scheme. Because the partial time derivative of the electron velocity is scaled by \(m_e\), it affects the time step we can choose to get the most accurate behavior. Currently, \(m_e < \Delta t \), so the singular perturbation is a larger burden on the system. Thus, we need to take \(\Delta t < m_e\) to limit any resulting numerical artifacts. This process is computationally expensive, however, since \(m_e\) is very small.

We can fight this hurdle by slightly adjusting the electron mass. While it will not stay true to the original scaling of the problem, it gives us a good prototype. We increase the value of \(m_e\) fifty-fold (\(m_e\approx 0.027\)), set \(\Delta t = 0.02\), and run our simulations to a max time of \(T=50\). The results for the electron velocity are given in Figure~\ref{fig:waterfall_big_me}. As suspected, the warbling, although still present, has been subdued to the point that it is barely perceivable.

\begin{figure}[h]
         \centering
         \includegraphics[width=0.75\textwidth, trim={0 40pt 0 70pt}, clip]{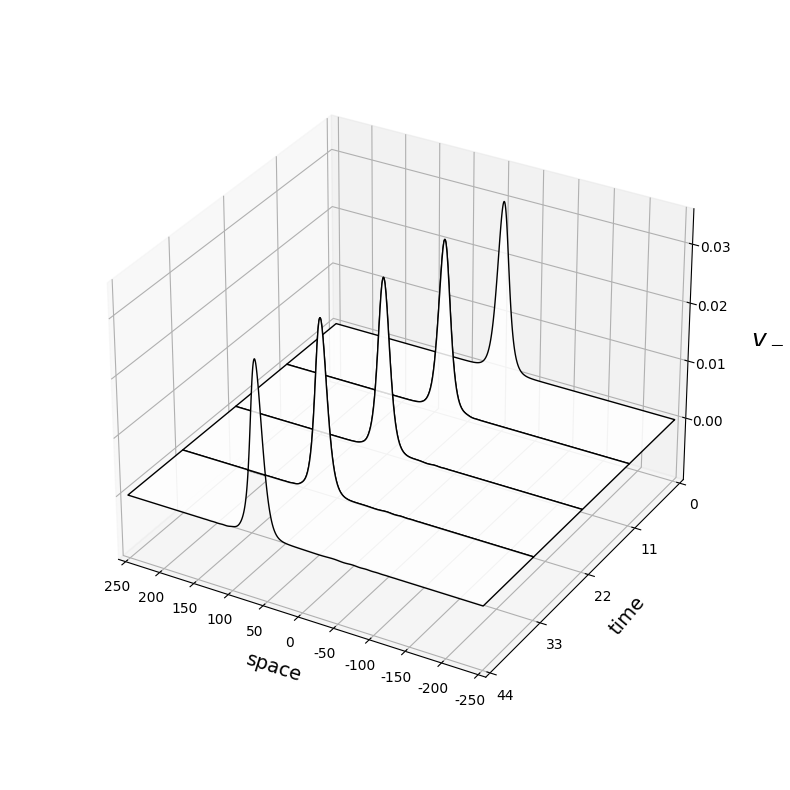}
     \caption{Propogation of hot electron velocity (\(\tau_i=1\)) over space and time with an initial condition of a traveling wave of speed \(\mu = c + 10^{-3}\) with \(\Delta t < m_e\).}
    \label{fig:waterfall_big_me}
\end{figure}

\subsection{Single-Fluid Plasma Simulation}

We will simulate the single-fluid plasma in a similar manner. We begin with \eqref{eq::3hotrescale} and replace \(n_+ \to 1 + n_+\). Then we add and subtract linear terms in the ion velocity equation, regardless of the value of \(\tau_i\), and in the Poisson equation to get
\begin{align*}
\partial_t n_+ & = -\partial_x v_+ -\partial_x(n_+ v_+), \\
\partial_t v_+ & = -\partial_x n_+ - \partial_x\phi - \tfrac{1}{2}\partial_x v_+^2 - \partial_x \ln(1 + n_+) + \partial_x n_+, \\
\phi -\partial_{xx}\phi - n_+ & = -e^{\phi} + \phi + 1 .
\end{align*}
We integrate each equation with respect to time and apply the trapezoidal rule to terms with spatial derivatives. We will also follow the work of Sattinger and Li \cite{li1999soliton}, and average the nonlinear terms in the Poisson equation. This yields
\begin{align*}
    n_{+,2} - n_{+,1} & = \tfrac{D \Delta t}{2}\left( -v_{+,2} - v_{+,1} - (n_{+,2}v_{+,2}) - (n_{+,1}v_{+,1})\right), \\[8pt]
    v_{+,2} - v_{+,1} & = \tfrac{D\Delta t}{2}\big( -\tau_i n_{+,2} -\tau_i n_{+,1} - \phi_2 - \phi_1 - \tfrac{1}{2} v_{+,2}^2 - \tfrac{1}{2} v_{+,1}^2 \\[2pt]
    & \phantom{= \tfrac{D\Delta t}{2}\big( } \; - \tau_i \ln(1 + n_{+,2}) - \tau_i \ln(1 + n_{+,1}) + \tau_i n_{+,2} + \tau_i n_{+,1} \big) ,\\[8pt]
    (1-D^2)\phi_2 - n_{+,2} & = \frac{-e^{\phi_2} + \phi_2 + 1}{2} + \frac{-e^{\phi_1} + \phi_1 + 1}{2}.
\end{align*}
Separating the time-steps and rewriting as an equation of matrices, we obtain
\begin{align*}
    \begin{bmatrix}
    1 & \tfrac{1}{2}D\Delta t & 0 \\[4pt]
    \tfrac{1}{2}D\Delta t & 1 & \tfrac{1}{2}D\Delta t \\[4pt]
    -1 & 0 & 1-D^2
    \end{bmatrix}
    \begin{bmatrix}
    n_{+,2} \\[4pt]
    v_{+,2} \\[4pt]
    \phi_2
    \end{bmatrix}
    &
    =
    \begin{bmatrix}
    1 & -\tfrac{1}{2}D\Delta t & 0 \\[4pt]
    -\tfrac{\tau_i }{2}D\Delta t & 1 & -\tfrac{1}{2}D\Delta t \\[4pt]
    0 & 0 & 0
    \end{bmatrix}
    \begin{bmatrix}
    n_{+,1} \\[4pt]
    v_{+,1} \\[4pt]
    \phi_1
    \end{bmatrix}
    \\[10pt]
    & \quad
    +
    \begin{bmatrix}
    \tfrac{1}{2}D\Delta t(- n_{+,2}v_{+,2}) \\[8pt]
    \tfrac{1}{2}D\Delta t\left(- \tfrac{1}{2}v_{+,2}^2 - \tau_i \ln(1+n_{+,2}) + \tau_i n_{+,2}\right) \\[8pt]
    \tfrac{1}{2}(-e^{\phi_2} + \phi_2 + 1)
    \end{bmatrix} \\[10pt]
    & \quad
    +
    \begin{bmatrix}
    \tfrac{1}{2}D\Delta t(- n_{+,1}v_{+,1}) \\[8pt]
    \tfrac{1}{2}D\Delta t\left(- \tfrac{1}{2}v_{+,1}^2 - \tau_i \ln(1+n_{+,1}) + \tau_i n_{+,1}\right) \\[8pt]
    \tfrac{1}{2}(-e^{\phi_1} + \phi_1 + 1)
    \end{bmatrix}.
\end{align*}
Utilizing the same scheme as \eqref{eq::future_time_calc} and similar code, we can solve this matrix equation to observe the single-fluid system's behavior in hot and cold plasmas. Our observations follow.

\subsubsection{Ion Density \& Velocity and Electric Potential}

The ion density, velocity, and electric potential in a hot and cold single-fluid plasma are shown in Figures~\ref{fig:waterfall_me0_hot} and \ref{fig:waterfall_me0_cold}, respectively. While they bear similar overall behaviors to the two-fluid system, notable distinctions emerge upon closer inspection. First, the trailing wave behavior is not as pronounced. The counter-propagating waves also decrease in magnitude, are less intense, and do not move as fast as their two-fluid counterparts. Moreover, the warbling in the middle of the electric potential term disappears. This loss is due to eradicating the singular perturbation by \(m_e\) in the Euler-Poisson system. 

\begin{figure}[h!]
     \centering
     \begin{subfigure}[b]{0.49\textwidth}
         \centering
         \includegraphics[width=\textwidth, trim={0 25pt 0 50pt}, clip]{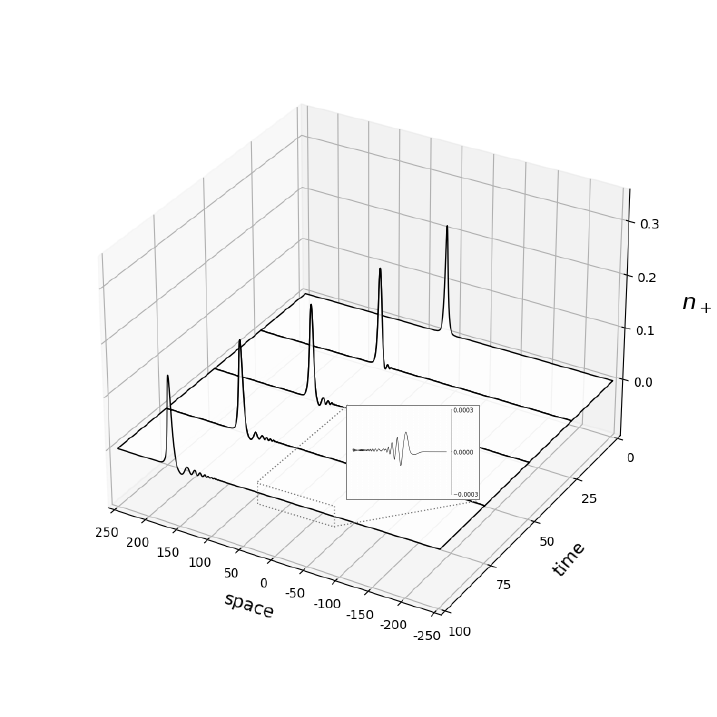}
         \caption{Ion Density, \(n_+(x,t)\) \\[14pt]}
         \label{fig:waterfall_me0_hot_n+}
     \end{subfigure}
     \hfill
     \begin{subfigure}[b]{0.49\textwidth}
         \centering
         \includegraphics[width=\textwidth, trim={0 40pt 0 70pt}, clip]{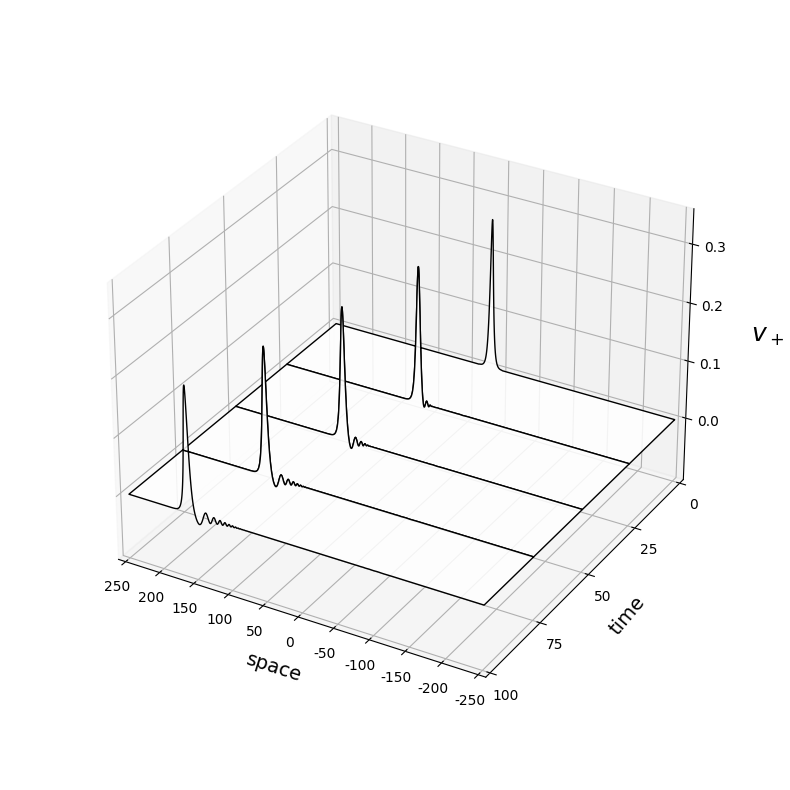}
         \caption{Ion Velocity, \(v_+(x,t)\) \\[14pt]}
         \label{fig:waterfall_me0_hot_v+}
     \end{subfigure}
     \hfill
     \begin{subfigure}[b]{0.49\textwidth}
         \centering
         \includegraphics[width=\textwidth, trim={0 40pt 0 70pt}, clip]{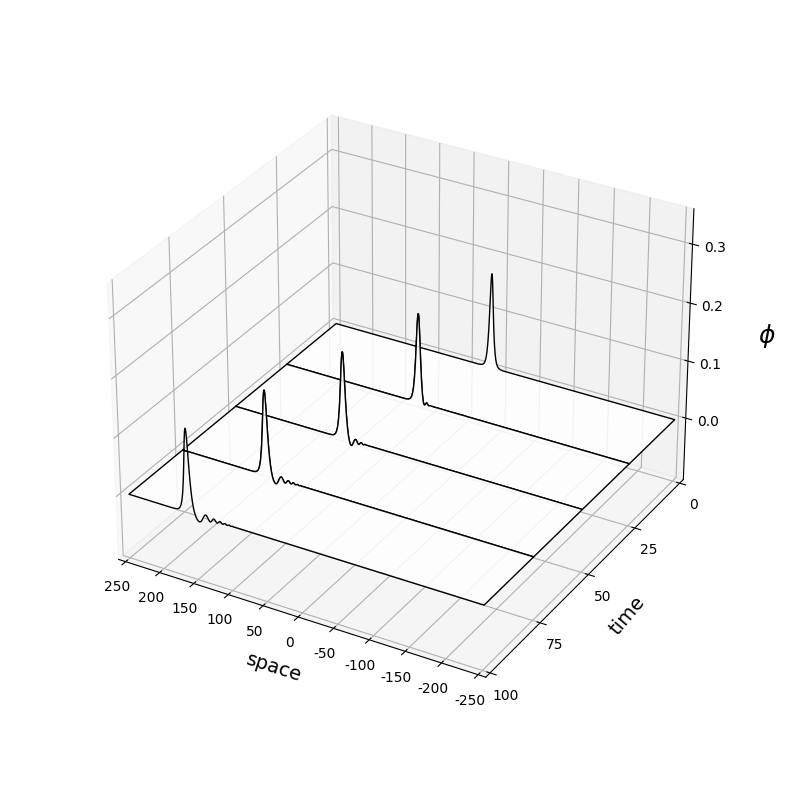}
         \caption{Electric Potential, \(\phi(x,t)\)}
         \label{fig:waterfall_me0_hot_phi}
     \end{subfigure}
     \caption{Propagation of a hot single-fluid plasma (\(\tau_i=1, \, m_e =0\)) over space and time. The initial condition is a traveling wave solution of speed \(\mu = 1.5\).}
    \label{fig:waterfall_me0_hot}
\end{figure}
\begin{figure}[h!]
     \centering
     \begin{subfigure}[b]{0.49\textwidth}
         \centering
         \includegraphics[width=\textwidth, trim={0 25pt 0 50pt}, clip]{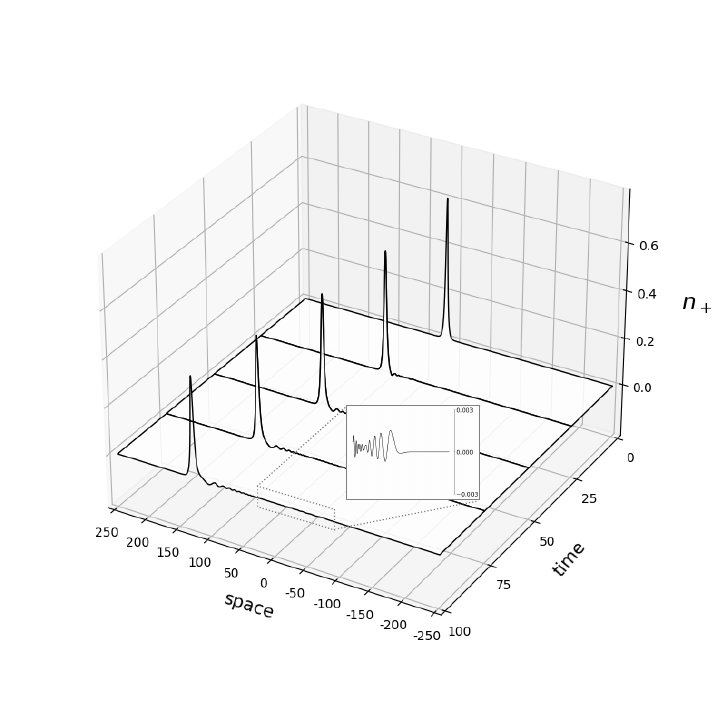}
         \caption{Ion Density, \(n_+(x,t)\) \\[14pt]}
         \label{fig:waterfall_me0_cold_n+}
     \end{subfigure}
     \hfill
     \begin{subfigure}[b]{0.49\textwidth}
         \centering
         \includegraphics[width=\textwidth, trim={0 40pt 0 70pt}, clip]{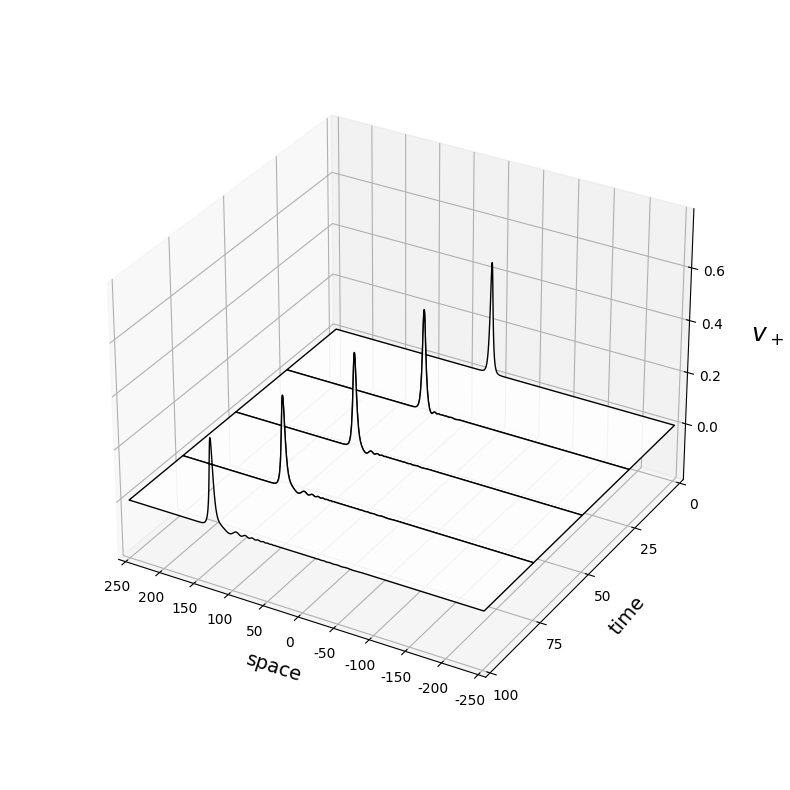}
         \caption{Ion Velocity, \(v_+(x,t)\) \\[14pt]}
         \label{fig:waterfall_me0_cold_v+}
     \end{subfigure}
     \hfill
     \begin{subfigure}[b]{0.49\textwidth}
         \centering
         \includegraphics[width=\textwidth, trim={0 40pt 0 70pt}, clip]{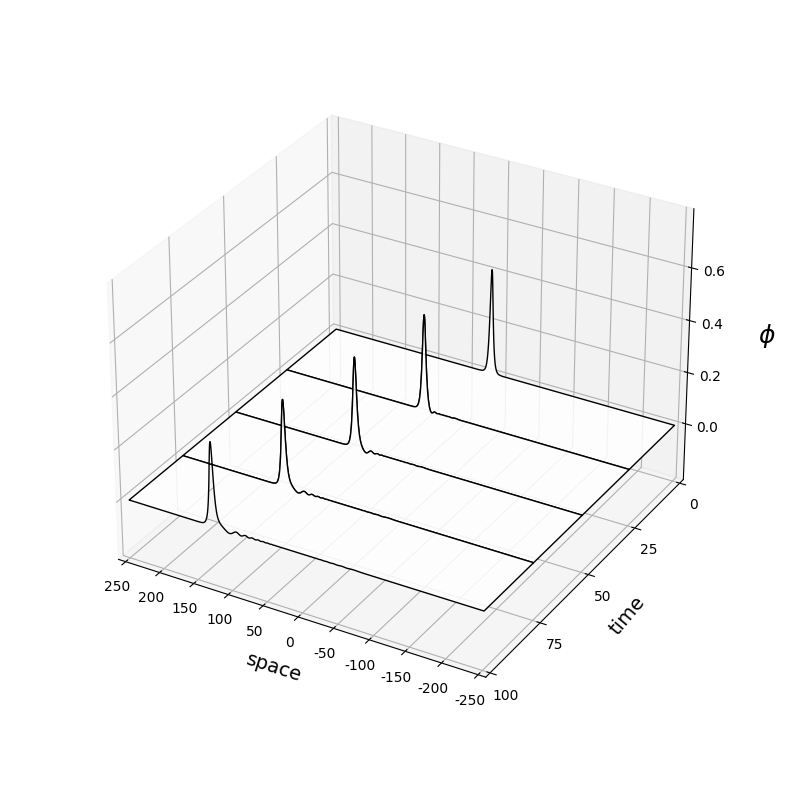}
         \caption{Electric Potential, \(\phi(x,t)\)}
         \label{fig:waterfall_me0_cold_phi}
     \end{subfigure}
     \caption{Propagation of a cold single-fluid plasma (\(\tau_i=0, \, m_e =0\)) over space and time. The initial condition is a traveling wave solution of speed \(\mu = 1.154\).}
    \label{fig:waterfall_me0_cold}
\end{figure}
\section{Conclusion and Future Directions}\label{conclusion}

Our research studied the dynamics of ion-acoustic waves in a two-fluid plasma, a domain relatively unexplored compared to past studies focusing solely on single-fluid plasmas. By considering the mass discrepancy between electrons and ions, we sought to unravel how this fundamental difference impacts wave behavior within the medium. Furthermore, our investigation explored the contrasting characteristics of hot and cold plasmas, manipulating ion temperatures to observe their distinct effects on wave dynamics. Through these comprehensive analyses, we aimed to contribute to a deeper understanding of plasma physics.

Indeed, we demonstrated how the two-fluid Euler-Poisson system has traveling wave solutions for specific wave speeds. Leveraging solutions obtained through traveling wave and KdV calculations, we established initial conditions for our subsequent numerical simulations. We then visualized the propagation of ion-acoustic waves and other dynamic phenomena by employing an algorithm utilizing Fast Fourier Transforms. Through this process, we observed the distinct contributions of the key parameters, \(m_e\) and \(\tau_i\), on plasma behavior.

Among the noteworthy discoveries, we observed striking parallels between the two-fluid system when \(m_e\) is set to zero and the theoretical outcomes derived from the single-fluid model. However, our investigations uncovered a significant caveat: nullifying the electron mass obscures a comprehensive understanding of plasma behavior. A particularly intriguing finding was the manifestation of the singular perturbations in the velocity of the electrons, resulting in erratic oscillations that profoundly influenced the electric potential in the system. 

Furthermore, our findings transcend the constraints of specific ion temperatures, broadening the applications of our results to other plasmas characterized by temperature. While studying the difference between hot and cold plasmas, we noted how the behaviors of the waves are affected. Even though the overall trends in wave behavior remain qualitatively similar across different thermal environments, quantitative deviations in amplitude, propagation, and velocity highlight the medium's sensitivity to temperature variations.

Looking ahead in our research journey, our focus shifts toward refining the robustness of our findings. In our exploration of traveling waves, we relied on an estimated electron mass for calculating a key value of an inverse function value. Ideally, we aim to circumvent this approach, aligning with the core objective of our project. Ultimately, we are looking to extend this study to higher dimensions to unveil more insights into the Euler-Poisson system's underlying dynamics. By venturing into higher-dimensional spaces, we anticipate uncovering novel patterns, phenomena, and emergent behaviors that may have eluded detection within the confines of lower-dimensional frameworks.

\newpage
\bibliographystyle{siam}

\end{document}